\theoremstyle{plain}
\newtheorem{lemma}{Lemma}
\theoremstyle{remark} 
\title{\textbf{Development of robust $X$-bar charts with unequal sample sizes}}
\author{
\textsf{Chanseok Park} \\
Applied Statistics Laboratory\\
Department of Industrial Engineering\\
Pusan National University\\
Busan 46241, Korea
\and
\textsf{Linhan Ouyang} \\
College of Economics and Management\\
Nanjing University of Aeronautics and Astronautics\\
Nanjing, Jiangsu 211106, China
\and
\textsf{Min Wang}\thanks{Corresponding author.
Email: \texttt{min.wang3@utsa.edu}}\\
Department of Management Science and Statistics \\
The University of Texas at San Antonio \\
San Antonio, TX, USA
}
\date{}
\begin{document}
\maketitle

\begin{abstract}
The traditional variable control charts, such as the $\bar{X}$ chart, are 
widely used to monitor variation in a process. 
They have been shown to perform well for monitoring processes 
under the general assumptions that the observations are normally distributed without data contamination and that the sample sizes from the process are all equal. However, these two assumptions may not be met and satisfied in many practical applications and thus make them potentially limited for widespread application 
especially in production processes. 
In this paper, we alleviate this limitation by providing a novel method 
for constructing the robust $\bar{X}$ control charts, which can simultaneously deal with both data contamination 
and unequal sample sizes. 
The proposed method for the process parameters
is optimal in a sense of the best linear unbiased estimation. 
Numerical results  from extensive Monte Carlo simulations and a real data analysis reveal that traditional control charts seriously underperform for monitoring process in the presence of data contamination and are extremely sensitive to even a single contaminated value, while the proposed robust control charts outperform in a manner that is comparable with the traditional ones, whereas they are far superior when the data are contaminated by outliers.

\medskip
\textbf{Keywords:}
Average run length; BLUE estimator; relative efficiency; robustness; statistical process control. 
\end{abstract}

\section{Introduction}

Variables control charts provide a powerful tool for monitoring variation in a process where the measurement is a variable \citep{Montgomery:2013a}, which can be measured on a continuous scale, such as length, pressure, width, temperature, and volume, in a time-ordered sequence. Since their introduction by \cite{Shewhart:1926b}, variable control charts, such as the Shewhart control chart, have been extensively adopted to detect if a process is in a state of control in different fields including industrial manufacturing, service sectors, and healthcare systems, to name just a few.  The statistical control charts can generally be classified into two phases: Phase-I and Phase-II \citep{Vining:2009,Montgomery:2013a}. First, in Phase-I monitoring, we establish reliable control limits with a clean set of process data and in Phase-II analysis, we use the control limits obtained in Phase-I in the monitoring of a process by comparing the statistic for each successive subgroup as future observations are obtained. 
We refer the interested reader to the collection of textbooks and papers in 
\cite{Montgomery:2013a, Montgomery:2019, Fara:Sani:Mont:2019, Wood:Falt:2019} for details.

In most Phase II monitoring situations, the majority of variable control charts  usually depend on the unknown process parameters that need to be estimated  based on an in-control Phase I sample or historical data, 
indicating the quality of these data plays an important role in determining the performance of the Phase-II control charts.  It deserves mentioning that the constructions of traditional control charts and other related techniques for statistical process monitoring usually requires the following two general assumptions: (i) the observations are normally distributed without data contamination \citep{Wheel:2010, Khakifirooz:2021} and (ii) the sample sizes from the process are all equal. However, these two assumptions are quite restrictive for developing efficient control charts for monitoring the manufacturing process.   

The traditional control charts, such as the $\bar{X}$ chart, are constructed using the sample mean and standard deviation which are sensitive to data contamination, because their breakdown points are zero. Recently, \cite{Yao:Chak:2021} pointed out the importance of accounting for the effect of parameter estimation in the Shewhart-type Phase I chart, since the direct use of the control charts with inaccurate parameter estimates for process monitoring could increase the rate of false alarms to an unacceptable level and even result in misleading results in practical applications, especially when the Phase I data are contaminated by outliers due to the measurement errors, the volatile operating conditions, among others. To overcome this limitation, researchers developed robust control charts by incorporating robust estimators for the location and scale parameters. For instance, \cite{Rocke:1989} obtained the control limits based on the trimmed mean and the interquartile range. 
\cite{Allo:Ragh:1991} and \cite{Papp:Ben:1996} used the Hodges-Lehmann control chart 
for location. \cite{Janacek/Meikle:1997} used the median, and  \cite{Abu-Shawiesh:2008} considered the median absolute deviation (MAD) for constructing robust control charts.

In addition, the construction of traditional control charts often requires the sample sizes to be all equal in the process, whereas this requirement may not be met in practice due to missing observations, cost constraints, etc. 
As commented by \cite{Kim/Reynolds:2005}, since the importance of each variable is different, it is necessary to take the variable sample size issue into consideration in the field of statistical process control. To tackle the issue of unbalanced sample sizes, \cite{Kosz:2018} proposed a risk-based concept for the design of an $\bar X$ chart with variable sample size, in which the optimal sample size is determined by the genetic algorithm and the Nelder-Mead direct algorithm. 

However, to our knowledge, no study has proposed 
the construction of robust $\bar{X}$ chart which can simultaneously deal with both data contamination and 
unequal sample sizes. 
To fill this gap, we first mitigate the effects of data contamination for the parameter estimation 
by incorporating robust estimators with high breakdown point values, 
such as the median and Hodges-Lehmann (HL) estimator for the location parameter and the MAD and Shamos estimators for the scale estimation parameter. To overcome the issue of unequal sample sizes, we take a weighted average approach in calculating $\bar X$ based on the sample sizes, since estimating a parameter from the larger samples is more reliable than that from smaller samples and the weighted average should be used instead of the simple average \citep{Burr:1969}. 
As an illustration from the location estimation, we assume that there are $m$ samples (subgroups) and $n_i$ is the sample size of the $i$th sample from a manufacturing process. Then the simple average, $\bar{\bar{X}}_A = \sum_{i=1}^m \bar{X}_i/m$,
is inferior to the weighted average, $\bar{\bar{X}}_B = \sum_{i=1}^m n_i \bar{X}_i / \sum_{i=1}^m n_i$, 
in a sense that $\mathrm{Var}(\bar{\bar{X}}_A) \ge \mathrm{Var}(\bar{\bar{X}}_B)$.
Note that $\bar{\bar{X}}_A$ has an unweighted average of the estimates while
$\bar{\bar{X}}_B$ has a weighted average with weights proportional to the sample sizes.
This idea pays off with the sample means. 
However, for other estimators such as the median and MAD estimators, 
although it looks appealing to use the weights proportional to the sample sizes,
the weights simply proportional to the sample sizes 
can make the estimation worse as will be shown in Section~\ref{SEC:process}.

The remainder of this paper is organized as follows. 
In Section~\ref{SEC:process}, 
we proposed the unbiased location and scale estimators for the process parameters with unequal sample sizes. 
In Section~\ref{SEC:performance}, we carry out simulation studies 
to investigate the performance of the proposed estimation methods under no data contamination 
and under data contamination. 
In Section~\ref{section4}, we briefly introduce how to construct robust control charts with the proposed robust estimators and compare the performance of the robust control charts under consideration. An illustrative example is provided 
in Section~\ref{section5} for demonstrative purposes. 
Finally, several concluding remarks are given in Section~\ref{section6}. 

\section{Process parameter estimation with unequal sample sizes\label{SEC:process}}
In this section, we propose the unbiased location and scale estimators 
for the process parameters under the assumption that each sample has different sample sizes and that the underlying distribution is normally distributed.
Let $X_{ij}$ be the $i$th sample (subgroup) of
size $n_i$ from a Phase-I process for $j=1,2,\ldots,n_i$ and
$i=1,2,\ldots,m$ .
We assume that $X_{ij}$'s are independent and identically distributed (iid) random variables from a normal distribution with location $\mu$ and variance $\sigma^2$.

\subsection{Estimation of the location parameter}\label{SEC:location}
The following location estimators for the population mean parameter are widely used 
in the quality control literature, 
\begin{align}
\bar{\bar{X}}_A &= \frac{\bar{X}_1 + \bar{X}_2 +\cdots+ \bar{X}_m}{m}
                = \frac{1}{m} \sum_{i=1}^m \bar{X}_i            \label{EQ:XA}
\intertext{and}
\bar{\bar{X}}_B &= \frac{n_1\bar{X}_1+n_2\bar{X}_2+\cdots+ n_m\bar{X}_m}{n_1+n_2+\ldots+n_m}
                = \frac{1}{N} \sum_{i=1}^m n_i \bar{X}_i,       \label{EQ:XB}
\end{align}
where $\bar{X}_i = \sum_{j=1}^{n_i} X_{ij}/n_i$ and $N=\sum_{i=1}^m n_i$; see, for example, Equations (6.2) and (6.30) of \cite{Montgomery:2013a}
for $\bar{\bar{X}}_A$ and $\bar{\bar{X}}_B$, respectively.
Their variances are given by
$\mathrm{Var}(\bar{\bar{X}}_A) = {\sigma^2} \sum_{i=1}^{m} n_i^{-1} / {m^2}$
and
$\mathrm{Var}(\bar{\bar{X}}_B) = \sigma^2/N$.
Using the inequality of the arithmetic and  harmonic means,
it can be easily shown that $\mathrm{Var}(\bar{\bar{X}}_A) \ge \mathrm{Var}(\bar{\bar{X}}_B)$
\citep{Park/Wang:2020a}. 
In this case, the estimator $\bar{\bar{X}}_B$ is actually optimal in a sense that
it is the best linear unbiased estimator (BLUE), which will be shown later.

Analogous to $\bar{\bar{X}}_A$ in (\ref{EQ:XA}) and $\bar{\bar{X}}_B$ in (\ref{EQ:XB}),
we can generalize the above with the following location estimators to estimate the population mean parameter $\mu$. 
\begin{align}
\bar{\hat{\mu}}_A &= \frac{\hat{\mu}_1 + \hat{\mu}_2 +\cdots+ \hat{\mu}_m}{m}
                = \frac{1}{m} \sum_{i=1}^m \hat{\mu}_i            \label{EQ:muA}
\intertext{and}
\bar{\hat{\mu}}_B &= \frac{n_1\hat{\mu}_1+n_2\hat{\mu}_2+\cdots+ n_m\hat{\mu}_m}{n_1+n_2+\ldots+n_m}
                = \frac{1}{N} \sum_{i=1}^m n_i \hat{\mu}_i,       \label{EQ:muB}
\end{align}
where $\hat{\mu}_i$ denotes the unbiased location estimator of $\mu$ with the $i$th sample.
It can be easily shown that they are unbiased, such that 
$E(\bar{\hat{\mu}}_A)=\mu$ and $E(\bar{\hat{\mu}}_B)=\mu$ and that their variances are given by 
$\mathrm{Var}(\bar{\hat{\mu}}_A) = \sum_{i=1}^{m} V_i / {m^2}$
and 
$\mathrm{Var}(\bar{\hat{\mu}}_B) = \sum_{i=1}^{m} n_i^2 V_i / {N^2}$, where
$V_i=\mathrm{Var}(\hat{\mu}_i)$.

When the sample mean is considered, it is always true that 
$\mathrm{Var}(\bar{\bar{X}}_A) \ge \mathrm{Var}(\bar{\bar{X}}_B)$ as afore-mentioned.
However, when other estimators are considered, $\mathrm{Var}(\bar{\hat{\mu}}_A) \ge \mathrm{Var}(\bar{\hat{\mu}}_B)$ does not hold in general.
For example, when the sample median is used with $n_1=4$ and $n_2=5$, by using numerical results from \cite{Park/Wang:2022b} and \cite{Park/Kim/Wang:2022}, we obtain 
$V_1=\mathrm{Var}(\hat{\mu}_1) = (1.1930/4)\sigma^2 = 0.29825\sigma^2$ and 
$V_2=\mathrm{Var}(\hat{\mu}_2) = (1.4339/5)\sigma^2 = 0.28678\sigma^2$, indicating that $\mathrm{Var}(\bar{\hat{\mu}}_A) = 0.146\sigma^2 < \mathrm{Var}(\bar{\hat{\mu}}_B) = 0.147\sigma^2$. In what follows, we propose the BLUE for the location parameter $\mu$. 
\begin{lemma} \label{LEM:location}
The BLUE for the location parameter $\mu$ is given by 
\begin{equation}
\bar{\hat{\mu}}_C = \frac{\sum_{i=1}^{m} (\hat{\mu}_i / \nu_i^2)}{\sum_{i=1}^{m} (1/\nu_i^2)},
\label{EQ:muC}
\end{equation}
where $\nu_i^2$ is the variance of $\hat{\mu}_i$ under the standard normal distribution.
\end{lemma}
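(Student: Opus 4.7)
The plan is to characterize the whole family of linear unbiased estimators of $\mu$ and then minimize the variance over that family. Any linear estimator has the form $\hat{\mu} = \sum_{i=1}^{m} w_i \hat{\mu}_i$, and because each $\hat{\mu}_i$ is unbiased for $\mu$, the combination is unbiased if and only if $\sum_{i=1}^{m} w_i = 1$. The estimators $\bar{\hat{\mu}}_A$ in (\ref{EQ:muA}) and $\bar{\hat{\mu}}_B$ in (\ref{EQ:muB}) are just two special choices of weights in this family, so it suffices to find the weights with minimum variance.

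Next I would compute the variance. Since the subgroups are independent, $\mathrm{Var}(\hat{\mu}) = \sum_{i=1}^{m} w_i^2 V_i$ with $V_i = \mathrm{Var}(\hat{\mu}_i)$. Because $X_{ij}$ is normal with location $\mu$ and scale $\sigma$, and the location estimators $\hat{\mu}_i$ under consideration (sample mean, median, Hodges--Lehmann, etc.) are translation- and scale-equivariant, we have $V_i = \nu_i^2 \sigma^2$, where $\nu_i^2$ is the variance of $\hat{\mu}_i$ computed under the standard normal. Hence $\mathrm{Var}(\hat{\mu}) = \sigma^2 \sum_{i=1}^{m} w_i^2 \nu_i^2$, and the factor $\sigma^2$ is irrelevant for the optimization.

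The problem then reduces to minimizing $\sum_i w_i^2 \nu_i^2$ subject to $\sum_i w_i = 1$. Two equivalent routes are available. Via Lagrange multipliers, differentiating $\sum_i w_i^2 \nu_i^2 - \lambda(\sum_i w_i - 1)$ in $w_i$ gives $w_i = \lambda/(2\nu_i^2)$, and the constraint then fixes $w_i = (1/\nu_i^2)/\sum_j (1/\nu_j^2)$. Alternatively, Cauchy--Schwarz yields
\[
1 = \Bigl(\sum_i w_i\Bigr)^2 = \Bigl(\sum_i (w_i \nu_i)(1/\nu_i)\Bigr)^2 \le \Bigl(\sum_i w_i^2 \nu_i^2\Bigr)\Bigl(\sum_i 1/\nu_i^2\Bigr),
\]
with equality precisely when $w_i \nu_i \propto 1/\nu_i$, i.e.\ $w_i \propto 1/\nu_i^2$. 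Substituting these optimal weights into $\sum_i w_i \hat{\mu}_i$ reproduces exactly the expression (\ref{EQ:muC}).

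The main (mild) obstacle is the scale-equivariance step $V_i = \nu_i^2 \sigma^2$; without it the weights would depend on the unknown $\sigma$ and the resulting estimator would no longer be linear in the $\hat{\mu}_i$'s with fixed coefficients. Fortunately all location estimators considered in this paper are equivariant under affine transformations $x \mapsto \mu + \sigma x$, so this step is automatic and the proof goes through as sketched.
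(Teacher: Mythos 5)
Your proposal is correct and follows essentially the same route as the paper: a linear combination $\sum_i w_i\hat{\mu}_i$, the unbiasedness constraint $\sum_i w_i=1$, minimization of $\sum_i w_i^2 V_i$ via Lagrange multipliers, and the location--scale reduction $V_i=\sigma^2\nu_i^2$ to eliminate the unknown $\sigma$. Your added Cauchy--Schwarz derivation is a nice self-contained alternative to the multiplier calculation, but it does not change the substance of the argument.
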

\begin{proof}
We consider a linear estimator in the form of $\bar{\hat{\mu}}_C = \sum_{i=1}^m w_i \hat{\mu}_i$
where $E(\hat{\mu}_i)=\mu$ as aforementioned. 
Since $\bar{\hat{\mu}}_C$ is unbiased such that $E(\bar{\hat{\mu}}_C) = \mu$, we have $\sum_{i=1}^m w_i=1$. Thus, our objective is to minimize $\mathrm{Var}(\bar{\hat{\mu}}_C)=\sum_{i=1}^m w_i^2 V_i$ 
with the constraint $\sum_{i=1}^m w_i=1$. We can set up the auxiliary function with the Lagrange multiplier $\lambda$ given by 
\[
\Psi = \sum_{i=1}^m w_i^2 V_i - \lambda \Big(\sum_{i=1}^m w_i-1 \Big).
\]
Differentiating $\Psi$ with respect to $w_i$ and setting it to zero, 
we have $2w_i V_i - \lambda=0$, which results in 
$w_i = \lambda/(2V_i)$. Since $\sum_{i=1}^m w_i=1$, we have $\sum_{i=1}^m \lambda/(2V_i)=1$
so that $\lambda=2/\sum_{i=1}^m(1/V_i)$. 
Then we obtain $w_i=(1/V_i)/\sum_{i=1}^m(1/V_i)$.
Since the normal distribution is a part of the location-scale family
of distributions, we have 
$V_i = \sigma^2 \nu_i^2$. Thus, we have 
$w_i=(1/\nu_i^2)/\sum_{i=1}^m(1/\nu_i^2)$, which completes the proof.
\end{proof}

We can easily show that 
$\mathrm{Var}(\bar{\hat{\mu}}_A) \ge \mathrm{Var}(\bar{\hat{\mu}}_C)$ using
 the inequality of the arithmetic and  harmonic means
and that $\mathrm{Var}(\bar{\hat{\mu}}_B) \ge \mathrm{Var}(\bar{\hat{\mu}}_C)$
using the Cauchy-Schwarz inequality, where $\mathrm{Var}(\bar{\hat{\mu}}_C)=1/\sum_{i=1}^m(1/V_i)$.
As mentioned above, there is no clear inequality relation between $\mathrm{Var}(\bar{\hat{\mu}}_A)$
and $\mathrm{Var}(\bar{\hat{\mu}}_B)$, unless the sample mean is considered.
For the special case of the sample mean $\hat{\mu}_i=\bar{X}_i$, the above $\bar{\hat{\mu}}_C$
is the same as $\bar{\hat{\mu}}_B$, since $V_i = \mathrm{Var}(\hat{\mu}_i) = \sigma^2/n_i$, 
which implies that $\bar{\bar{X}}_B = \sum_{i=1}^m n_i \bar{X}_i / N$ is optimal.

We here develop a robust BLUE location estimator using the median and HL estimators. 
Let $X_{ij} = \mu + \sigma Z_{ij}$, where $Z_{ij}$ are 
iid with $N(0,1)$. 
Let $\mathrm{median}(X_i) = \mathrm{median}\{X_{i1}, \ldots, X_{in_i}\}$.
We have $\mathrm{median}(X_i) = \mu + \sigma\cdot \mathrm{median}(Z_i)$, 
so that $\mathrm{Var}\{\mathrm{median}(X_i)\} = \sigma^2 \mathrm{Var}\{\mathrm{median}(Z_i)\}$. 
The HL estimator is calculated by the median of all pairwise (Walsh) averages of the observations,
which is given by 
\[
\mathrm{HL}(X_i) = \mathop{\mathrm{median}} \Big( \frac{X_{ik}+X_{i\ell}}{2} \Big).
\]
Here the median of all Walsh averages can be calculated for the three cases 
\citep{Park/Kim/Wang:2022}: 
(i) $k<\ell$, (ii) $k\le\ell$, and  (iii) $\forall(k,\ell)$, for $k,\ell = 1,2,\ldots,n_i$. Thus, we can have three versions as below
\begin{align*}
\mathrm{HL}1(X_i) &= \mathop{\mathrm{median}}_{k<\ell} \Big( \frac{X_{ik}+X_{i\ell}}{2} \Big),  \\
\mathrm{HL}2(X_i) &= \mathop{\mathrm{median}}_{k\le \ell} \Big( \frac{X_{ik}+X_{i\ell}}{2} \Big),   \\
\intertext{and} 
\mathrm{HL}3(X_i) &= \mathop{\mathrm{median}}_{\forall(k,\ell)} \Big( \frac{X_{ik}+X_{i\ell}}{2} \Big).
\end{align*}
It is noteworthy that all three versions are asymptotically equivalent \citep{Serfling:2011}. 
In this paper, we use $\mathrm{HL}1$, denoted as $\mathrm{HL}$ for brevity.
For the HL case, we also have 
\[
\mathrm{HL}(X_i) = \mathop{\mathrm{median}}_{k<\ell} \Big( \frac{X_{ik}+X_{i\ell}}{2} \Big)
            = \mu + \sigma \cdot \mathop{\mathrm{median}}_{k<\ell} \Big( \frac{Z_{ik}+Z_{i\ell}}{2} \Big) 
            = \mu + \sigma \cdot \mathrm{HL}(Z_i).
\]
Thus, we have 
\[
\mathrm{Var}\left\{ \mathrm{HL}(X_i) \right\} = \sigma^2\cdot \mathrm{Var}\left\{ \mathrm{HL}(Z_i)  \right\} .
\]
By using the empirical variances of the sample median and HL estimator from \cite{Park/Wang:2022b} and \cite{Park/Kim/Wang:2022} along with Lemma~\ref{LEM:location},  we can easily obtain the robust BLUE estimators for the population mean based on the sample median and the HL estimator.

\subsection{Estimation of the scale parameter}\label{SEC:scale}
\cite{Park/Wang:2020a} proposed the following unbiased estimators for the population scale $\sigma$, 
which are given by 
\begin{align} 
\bar{S}_A&=  \frac{S_1/c_4(n_1) + S_2/c_4(n_2) +\cdots+ S_m/c_4(n_m)}{m}
          = \frac{1}{m} \sum_{i=1}^m \frac{S_i}{c_4(n_i)}, \label{EQ:SA} 
\intertext{and}
\bar{S}_B&= \frac{S_1+S_2+\cdots+ S_m}{c_4(n_1)+c_4(n_2)+\cdots+ c_4(n_m)}
          = \frac{\sum_{i=1}^m S_i}{\sum_{i=1}^m c_4(n_i)}, \label{EQ:SB} 
\end{align}
where ${S}_i^2  = \sum_{j=1}^{n_i} (X_{ij}-\bar{X}_i)^2 / (n_i-1)$
and $c_4(n_i) = \sqrt{{2}/{(n_i-1)}} \cdot {\Gamma(n_i/2)}/{\Gamma( (n_i-1)/2 )}$.
In a similar way as done above, we can consider the following estimators
for the scale parameter given by 
\begin{align} 
\bar{\hat{\sigma}}_A&=  \frac{\hat{\sigma}_1/C_1 + \hat{\sigma}_2/C_2 +\cdots+ \hat{\sigma}_m/C_m}{m}
          = \frac{1}{m} \sum_{i=1}^m \frac{\hat{\sigma}_i}{C_i}, \label{EQ:sigmaA} \\
\intertext{and}
\bar{\hat{\sigma}}_B&= \frac{\hat{\sigma}_1+\hat{\sigma}_2+\cdots+\hat{\sigma}_m}{C_1+C_2+\cdots+C_m}
          = \frac{\sum_{i=1}^m \hat{\sigma}_i}{\sum_{i=1}^m C_i}, \label{EQ:sigmaB} 
\end{align}
where $\hat{\sigma}_i$ denotes the estimator of $\sigma$ with the $i$the sample
and $C_i = E(\hat{\sigma}_i)/\sigma$.
Then both $\bar{\hat{\sigma}}_A$ and $\bar{\hat{\sigma}}_B$ are unbiased. 
The variances of $\bar{\hat{\sigma}}_A$ and $\bar{\hat{\sigma}}_B$ are given by 
\[
\mathrm{Var}(\bar{\hat{\sigma}}_A) = \frac{1}{m^2} \sum_{i=1}^m \frac{V_i}{C_i^2}
\textrm{~~and~~}
\mathrm{Var}(\bar{\hat{\sigma}}_B) = \frac{\sum_{i=1}^m V_i}{ (\sum_{i=1}^m C_i)^2},
\]
where $V_i = \mathrm{Var}(\hat{\sigma}_i)$. We observe from \cite{Park/Wang:2020a} that $\mathrm{Var}(\bar{S}_A) \ge \mathrm{Var}(\bar{S}_B)$
using the Chebyshev's sum inequality along with the property that $c_4(x)$ is increasing and $1/c_4(x)^2-1$ is decreasing. 

However, the inequality of  $\mathrm{Var}(\bar{\hat{\sigma}}_A) \ge \mathrm{Var}(\bar{\hat{\sigma}}_B)$ does not hold in general.  As an illustration, we consider the MAD given by  
\[
\mathrm{MAD}(X_i) 
= \frac{\displaystyle{\mathop\mathrm{median}_{1\le j\le n_i}}|X_{ij}-\tilde{\mu}_i|}{\Phi^{-1}({3}/{4})},
\]
where $\tilde{\mu}_i = \mathrm{median}(X_i)$.
Here $\Phi^{-1}({3}/{4})$ is needed to make this estimator Fisher-consistent 
\citep{Fisher:1922} for the standard deviation $\sigma$ under the normal distribution. 
Since this MAD is not unbiased with a finite sample size, \cite{Park/Wang:2022b} and \cite{Park/Kim/Wang:2022} obtained the empirical unbiasing factor, denoted by $c_5(n_i)$, for this MAD. 
Thus, we can obtain the \emph{unbiased} MAD with a finite sample size
 for the standard deviation $\sigma$ under the normal distribution
which is 
\begin{equation}  \label{EQ:unibaseMAD}
\frac{\mathrm{MAD}(X_i)}{c_5(n_i)}.
\end{equation}
For example, when the unbiased MAD is used with the sample sizes $n_1=4$ and $n_2=5$, 
we have $\mathrm{Var}(\bar{\hat{\sigma}}_A) = 0.167\sigma^2$ and 
$\mathrm{Var}(\bar{\hat{\sigma}}_B) = 0.168\sigma^2$ so that  we have 
$\mathrm{Var}(\bar{\hat{\sigma}}_A) < \mathrm{Var}(\bar{\hat{\sigma}}_B)$ in this case.

\begin{figure}[tp]
\centering\includegraphics{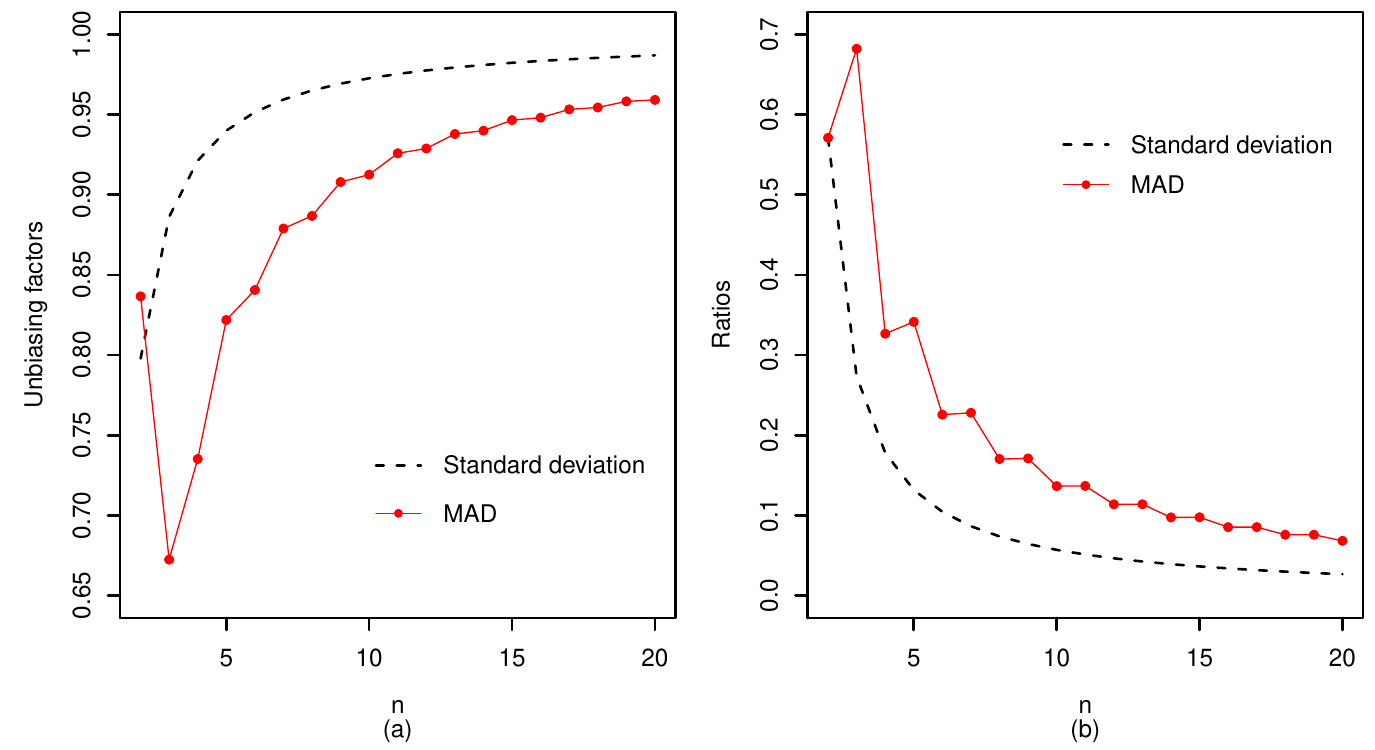}  
\caption{The unbiasing factors and ratios of the squared unbiasing factor to the variance.
\label{FIG:FactorsMAD}}
\end{figure}

The term $c_4(x)$ is the unbiasing factor for the standard deviation
and $1-c_4(x)^2$ is the variance of the standard deviation under the standard normal distribution.
Then we have $\mathrm{Var}(S/c_4(n)) = \sigma^2(1-c_4(x)^2)/c_4(x)^2$.
Thus, $1/c_4(x)^2-1=(1-c_4(x)^2)/c_4(x)^2$ can be thought of 
as the ratio of the variance to the squared unbiasing factor for the estimator under the standard normal distribution.
In Figure~\ref{FIG:FactorsMAD}, we draw the unbiasing factors and 
the ratios of the variance to the squared unbiasing factor for the standard deviation and the unbiased MAD
under the standard normal distribution.
This figure indicates that the inequality of 
$\mathrm{Var}(\bar{\hat{\sigma}}_A) \ge \mathrm{Var}(\bar{\hat{\sigma}}_B)$ may not hold for the unbiased MAD.
In what follows, we propose the BLUE for the scale parameter.

\begin{lemma} \label{LEM:scale}
The BLUE for the scale parameter is given by
\begin{equation}
\bar{\hat{\sigma}}_C 
= \frac{\sum_{i=1}^{m} (\gamma_i/\tau_i^2) \hat{\sigma}_i}{\sum_{i=1}^{m} (\gamma_i^2/\tau_i^2)},
\label{EQ:sigmaC}
\end{equation}
where $\gamma_i$ and $\tau_i^2$ are the expectation and
variance of $\hat{\sigma}_i$ under the standard normal distribution, respectively.
\end{lemma}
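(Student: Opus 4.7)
The plan is to mimic the Lagrange-multiplier argument used in the proof of Lemma~\ref{LEM:location}, but with one crucial modification: the linear ansatz is in terms of the raw $\hat{\sigma}_i$, which are \emph{not} themselves unbiased for $\sigma$, so the unbiasedness constraint is no longer $\sum w_i = 1$ but instead a $\gamma_i$-weighted sum.

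First, I would set $\bar{\hat{\sigma}}_C = \sum_{i=1}^m w_i \hat{\sigma}_i$ and translate the unbiasedness requirement. Because $\hat{\sigma}_i$ is the estimator before scaling, the location-scale structure of the normal family gives $E(\hat{\sigma}_i) = \sigma\gamma_i$ and $\mathrm{Var}(\hat{\sigma}_i) = \sigma^2\tau_i^2$. The condition $E(\bar{\hat{\sigma}}_C) = \sigma$ therefore reduces to the constraint
\[
\sum_{i=1}^m w_i \gamma_i = 1,
\]
and the objective becomes $\mathrm{Var}(\bar{\hat{\sigma}}_C) = \sigma^2 \sum_{i=1}^m w_i^2 \tau_i^2$, so minimising the variance is equivalent to minimising $\sum_i w_i^2 \tau_i^2$ under the above linear constraint.

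Next, I would form the auxiliary function
\[
\Psi = \sum_{i=1}^m w_i^2 \tau_i^2 - \lambda\Bigl(\sum_{i=1}^m w_i \gamma_i - 1\Bigr),
\]
differentiate with respect to each $w_i$, and set the derivative to zero to obtain $2 w_i \tau_i^2 - \lambda\gamma_i = 0$, i.e.\ $w_i = \lambda\gamma_i/(2\tau_i^2)$. Substituting back into the constraint pins down $\lambda = 2/\sum_{i=1}^m (\gamma_i^2/\tau_i^2)$, hence
\[
w_i = \frac{\gamma_i/\tau_i^2}{\sum_{k=1}^m (\gamma_k^2/\tau_k^2)},
\]
and plugging these weights into $\sum_i w_i \hat{\sigma}_i$ yields the claimed formula (\ref{EQ:sigmaC}).

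I do not anticipate a real obstacle: the stationary point is the global minimum because the objective $\sum_i w_i^2 \tau_i^2$ is a positive-definite quadratic form in $w$ and the constraint set is an affine hyperplane, so the first-order condition is both necessary and sufficient. The only step that requires care, and that distinguishes this lemma from Lemma~\ref{LEM:location}, is keeping track of the fact that the unbiasedness constraint is $\sum_i w_i \gamma_i = 1$ rather than $\sum_i w_i = 1$; getting this right is what produces the $\gamma_i$ in the numerator of the weights and the $\gamma_i^2$ in the denominator of (\ref{EQ:sigmaC}).
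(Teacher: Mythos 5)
Your proposal is correct and follows essentially the same route as the paper: a linear ansatz $\sum_i w_i\hat{\sigma}_i$, the unbiasedness constraint $\sum_i w_i\gamma_i=1$ (the paper writes it as $\sum_i w_iC_i=1$ with $C_i=E(\hat{\sigma}_i)/\sigma$ and only substitutes $C_i=\gamma_i$, $V_i=\sigma^2\tau_i^2$ at the end), and the same Lagrange-multiplier computation yielding $w_i=(\gamma_i/\tau_i^2)/\sum_k(\gamma_k^2/\tau_k^2)$. Your added remark that the stationary point is the global minimum because the objective is a positive-definite quadratic on an affine constraint set is a small improvement over the paper, which omits this justification.
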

\begin{proof}
We consider a linear estimator in the form of $\bar{\hat{\sigma}}_C = \sum_{i=1}^m w_i \hat{\sigma}_i$.
Since $E(\bar{\hat{\sigma}}_C) = \sigma$, we have $\sum_{i=1}^m w_i C_i=1$, 
where $C_i = E(\hat{\sigma}_i)/\sigma$ as mentioned above.
With the constraint $\sum_{i=1}^m w_i C_i=1$, 
the estimator $\bar{\hat{\sigma}}_C$ is guaranteed to be unbiased.
Thus, our objective is to minimize $\mathrm{Var}(\bar{\hat{\sigma}}_C)=\sum_{i=1}^m w_i^2 V_i$
with this constraint, where $V_i = \mathrm{Var}(\hat{\sigma}_i)$.
We set up the auxiliary function with the Lagrange multiplier $\lambda$ given by
\[
\Psi = \sum_{i=1}^m w_i^2 V_i - \lambda \Big(\sum_{i=1}^m w_i C_i - 1 \Big).
\]
Differentiating $\Psi$ with respect to $w_i$ and setting it to zero,
we have $2w_i V_i - \lambda C_i=0$, which results in
$w_i = \lambda C_i/(2V_i)$. Since $\sum_{i=1}^m w_i C_i = 1$, 
we have $\sum_{i=1}^m \lambda C_i^2/(2V_i)=1$
so that $\lambda=2/\sum_{i=1}^m(C_i^2/V_i)$.
Thus, we have $w_i = (C_i/V_i)/\sum_{i=1}^m(C_i^2/V_i)$.
Since the normal distribution is a part of the location-scale family, 
we have $C_i=\gamma_i$ and $V_i = \sigma^2 \tau_i^2$, so that we have 
$w_i=(\gamma_i/\tau_i^2)/\sum_{i=1}^m(\gamma_i^2/\tau_i^2)$, which completes the proof.
\end{proof}

The term $\gamma_i$ in Lemma~\ref{LEM:scale} can be regarded as an unbiasing factor for $\hat{\sigma}_i$.
Thus, if $\hat{\sigma}_i$ is an unbiased estimator of $\sigma$, then we have $\gamma_i=1$. Thus, we can rewrite (\ref{EQ:sigmaC}) as 
\begin{equation} \label{EQ:unbiasedscale}
\bar{\hat{\sigma}}_C 
= \frac{\sum_{i=1}^{m} (1/\tau_i^2) \hat{\sigma}_i}{\sum_{i=1}^{m} (1/\tau_i^2)}.
\end{equation}
When $X_{ij} = \mu + \sigma Z_{ij}$, we have $\mathrm{MAD}(X_i) = \sigma \mathrm{MAD}(Z_i)$. 
We can then incorporate Lemma~\ref{LEM:scale} into $\mathrm{MAD}(X_i)$.
We also consider another robust scale estimator called Shamos estimator  which is given by 
\[
\mathrm{Shamos}(X_i)
= \frac{\displaystyle\mathop{\mathrm{median}}_{k<\ell} \big( |X_{ik}-X_{i\ell}| \big)}{\sqrt{2}\,\Phi^{-1}(3/4)},
\]
which is also biased with a finite sample size. 
Here $\sqrt{2}\,\Phi^{-1}(3/4)$ is needed to make this estimator Fisher-consistent 
for the standard deviation $\sigma$ under the normal distribution \citep{Levy/etc:2011}.
To make it unbiased with a finite sample, we adopt the empirical unbiasing factor $c_6(n_i)$ obtained by \cite{Park/Wang:2022b} and \cite{Park/Kim/Wang:2022} and obtain the \emph{unbiased} Shamos estimator given by 
\begin{equation} \label{EQ:unbiasedShamos}
\frac{\mathrm{Shamos}(X_i)}{c_6(n_i)}. 
\end{equation}
Thus, we can easily obtain BLUE for the Shamos case  using (\ref{EQ:unbiasedShamos}) along with (\ref{EQ:unbiasedscale}). Analogous to the case of the MAD, we have $\mathrm{Shamos}(X_i) = \sigma \mathrm{Shamos}(Z_i)$,
so that we can incorporate Lemma~\ref{LEM:scale} into $\mathrm{Shamos}(X_i)$.

It deserves mentioning that 
\cite{Burr:1969} suggested the unbiased estimator of $\sigma$,
\begin{equation} \label{EQ:SD}
\bar{S}_D = \frac{S_p}{c_4(N-m+1)},
\end{equation}
where $S_p^2 = {\sum_{i=1}^m (n_i-1) S_i^2 }/{(N-m)}$. 
As shown in Theorem~4 of \cite{Park/Wang:2020a}, $\mathrm{Var}(\bar{S}_C) \ge \mathrm{Var}(\bar{S}_D)$.
Let $\bar{\hat{\sigma}}_D$ be the estimator for the scale using the MAD with analogy 
to $\bar{S}_D$ above. We can consider 
\[
\mathrm{MAD}(X) 
= \frac{\displaystyle{\mathop\mathrm{median}_{1\le i\le m, 1\le j\le n_i}}|X_{ij}-\tilde{\mu}|}{\Phi^{-1}({3}/{4})}, 
\]
where $\tilde{\mu} = \displaystyle\mathop\mathrm{median}_{1\le i\le m, 1\le j\le n_i}(X_{ij})$.
Then the unbiased estimator of $\sigma$ is given by 
\[
\bar{\hat{\sigma}}_D = \frac{\mathrm{MAD}(X)}{c_5(N)} .
\]
The terms related to the variance and squared unbiasing factor 
(possibly their ratio) under the standard normal play a role in the calculation of its variance. 
We observe from Figure~\ref{FIG:FactorsMAD} that these values change like 
a saw due to the nature of the different median calculations with even and odd sample sizes. 
Thus, as with the case of $\bar{\hat{\sigma}}_A$ and $\bar{\hat{\sigma}}_B$,
the inequality of $\mathrm{Var}(\bar{\hat{\sigma}}_C) \ge \mathrm{Var}(\bar{\hat{\sigma}}_D)$ 
may not hold. \cite{Park/Wang:2020a} numerically showed that the performances of $\bar{S}_C$ and $\bar{S}_D$ 
are very close in all the performance measures although $\bar{S}_D$ is slightly better with respect to the relative efficiency (RE), the average run length (ARL), 
and the standard deviation of run length (SRDL).
Thus, we here consider only the three types of estimators denoted by $A$, $B$, $C$ as above.

\section{Performance of the process parameter estimators\label{SEC:performance}}
In this section, we compare the performance of the proposed methods in (\ref{EQ:muC}) and  (\ref{EQ:sigmaC})
under no data contamination and under data contamination as well. 

\subsection{Performance under no contamination}
First, we use the RE defined as the ratio of their variances. Let $\hat{\theta}_0$ and $\hat{\theta}_1$ be two estimators with $\hat{\theta}_0$ being a baseline or reference
estimator. Then the RE (in percentage) of $\hat{\theta}_1$ with respect to  $\hat{\theta}_0$ is defined as 
\begin{equation} \label{EQ:RE}
\mathrm{RE}(\hat{\theta}_1 \mid \hat{\theta}_0) 
  = \frac{\mathrm{Var}(\hat{\theta}_0)}{\mathrm{Var}(\hat{\theta}_1)} \times 100 .
\end{equation}
We consider three samples with the four different scenarios. 
The sample sizes of each scenario are given by 
\begin{itemize}
\item[(a)] $n_1=3$, $n_2=10$, $n_3=17$, 
\item[(b)] $n_1=5$, $n_2=10$, $n_3=15$, 
\item[(c)] $n_1=7$, $n_2=10$, $n_3=13$,
\item[(d)] $n_1=9$, $n_2=10$, $n_3=11$.
\end{itemize}

We generate $X_{ij}$'s from the normal distribution with mean $\mu_0=100$ and standard deviation $\sigma_0=10$. For each of the above four scenarios, 
we estimate $\mu_0$ using the three types of estimation methods
($\bar{\hat{\mu}}_A$, $\bar{\hat{\mu}}_B$, $\bar{\hat{\mu}}_C$) with the mean, median, and HL estimators, respectively. 
Similarly, we estimate $\sigma_0$
using the three types of estimation methods 
($\bar{\hat{\sigma}}_A$, $\bar{\hat{\sigma}}_B$, $\bar{\hat{\sigma}}_C$)
with the standard deviation, MAD, and Shamos estimators, respectively.
We repeat each simulation one million times ($I=10^6$) to obtain the empirical variances and RE values
which are provided in Table~\ref{TBL:varRE}. 
We here calculate the RE values with respect to the BLUE of the mean for $\mu_0$
and the BLUE of the standard deviation for $\sigma_0$.
Thus, the RE of the mean with type C is always 100\% and that of the standard deviation with type C is also 
100\%.
We summarize the values of RE in Figure~\ref{FIG:RElocationpure} (location estimators) 
and Figure~\ref{FIG:REscalepure} (scale estimators).

As seen in Figures \ref{FIG:RElocationpure} and \ref{FIG:REscalepure},
the mean and standard deviation with type C is the best as expected. 
For the location case, the median does not perform well, but the HL with type C is nearly as efficient as 
the mean. For the scale case, the MAD performs poorly, but the Shamos with type C performs well. 
It should be noted that for most cases, the performance with type B is better than that with type A, but
for the Shamos case with small samples, the performance with type A is better than that with type B
as seen in Figure \ref{FIG:REscalepure} (a) and (b).

\begin{table}[tbp]
\renewcommand\arraystretch{0.8}
\caption{\label{TBL:varRE}Estimated variances and relative efficiency.}
\begin{center}
\begin{small}
\begin{tabular}{cl*{5}{r}l*{3}{r}}
\hline
 && \multicolumn{5}{c}{Location estimator} && \multicolumn{3}{c}{Scale estimator} \\
 && mean & median & $\mathrm{HL}1$ & $\mathrm{HL}2$ & $\mathrm{HL}3$ && SD & MAD & Shamos \\
\hline
         && \multicolumn{9}{c}{Sample sizes: $n_1=3$,\quad $n_2=10$,\quad $n_3=17$} \\ 
Variance && \\
A        && 5.4645 & 7.5319 & 5.8887 & 5.6722 & 5.8952 && 4.0281 & 10.0563 & 4.6174  \\
B        && 3.3335 & 4.8851 & 3.5346 & 3.5459 & 3.5465 && 3.7081 &  7.7644 & 5.5640  \\
C        && 3.3335 & 4.8711 & 3.5344 & 3.5452 & 3.5463 && 1.9040 &  4.8842 & 2.3609  \\
RE (\%)  &&  \\
A        &&  61.00 &  44.26 &  56.61 &  58.77 &  56.55 &&  47.27 &  18.93 &  41.24  \\
B        &&  100.0 &  68.24 &  94.31 &  94.01 &  93.99 &&  51.35 &  24.52 &  34.22  \\
C        &&  100.0 &  68.44 &  94.32 &  94.03 &  94.00 &&  100.0 &  38.98 &  80.65  \\
\hline
         && \multicolumn{9}{c}{Sample sizes: $n_1=5$,\quad $n_2=10$,\quad $n_3=15$} \\ 
Variance && \\
A        && 4.0707 & 5.8565 & 4.3195 & 4.3723 & 4.3577 && 2.5087 & 6.4081 & 3.3051  \\
B        && 3.3335 & 4.8830 & 3.5306 & 3.5681 & 3.5493 && 2.4585 & 6.0195 & 3.4176  \\
C        && 3.3335 & 4.8732 & 3.5306 & 3.5681 & 3.5493 && 1.9070 & 4.8877 & 2.4273  \\
RE (\%)  && \\
A        &&  81.89 &  56.92 &  77.17 &  76.24 &  76.49 &&  76.01 &  29.76 &  57.70  \\
B        && 100.00 &  68.27 &  94.41 &  93.42 &  93.91 &&  76.57 &  31.68 &  55.80  \\
C        && 100.00 &  68.40 &  94.41 &  93.42 &  93.92 && 100.00 &  39.01 &  78.56  \\    
\hline
         && \multicolumn{9}{c}{Sample sizes: $n_1=7$,\quad $n_2=10$,\quad $n_3=13$} \\ 
Variance &&  \\
A        && 3.5502 & 5.1726 & 3.7679 & 3.8218 & 3.7973 && 2.0747 & 5.3268 & 2.6692  \\
B        && 3.3335 & 4.8812 & 3.5334 & 3.5796 & 3.5561 && 2.0644 & 5.2467 & 2.6906  \\
C        && 3.3335 & 4.8734 & 3.5334 & 3.5795 & 3.5561 && 1.9082 & 4.8935 & 2.4319  \\
RE (\%)  && \\
A        &&  93.90 &  64.45 &  88.47 &  87.22 &  87.79 &&  91.97 &  35.82 &  71.49  \\
B        && 100.00 &  68.29 &  94.34 &  93.13 &  93.74 &&  92.43 &  36.37 &  70.92  \\
C        && 100.00 &  68.40 &  94.34 &  93.13 &  93.74 && 100.00 &  38.99 &  78.46  \\
\hline
         && \multicolumn{9}{c}{Sample sizes: $n_1=9$,\quad $n_2=10$,\quad $n_3=11$} \\ 
Variance && \\
A        && 3.3552 & 4.9101 & 3.5543 & 3.5995 & 3.5734 && 1.9249 & 4.9478 & 2.4608 \\
B        && 3.3335 & 4.8812 & 3.5318 & 3.5751 & 3.5498 && 1.9240 & 4.9419 & 2.4629 \\
C        && 3.3335 & 4.8737 & 3.5318 & 3.5751 & 3.5499 && 1.9084 & 4.8914 & 2.4403 \\
RE (\%)  && \\
A        &&  99.35 &  67.89 &  93.79 &  92.61 &  93.29 &&  99.14 &  38.57 & 77.56   \\
B        && 100.00 &  68.29 &  94.39 &  93.24 &  93.91 &&  99.18 &  38.62 & 77.49   \\
C        && 100.00 &  68.40 &  94.39 &  93.24 &  93.91 && 100.00 &  39.01 & 78.20   \\
\hline
\end{tabular}
\end{small}
\end{center}
\end{table}

\bigskip
\begin{figure}[thbp]
\centering\includegraphics{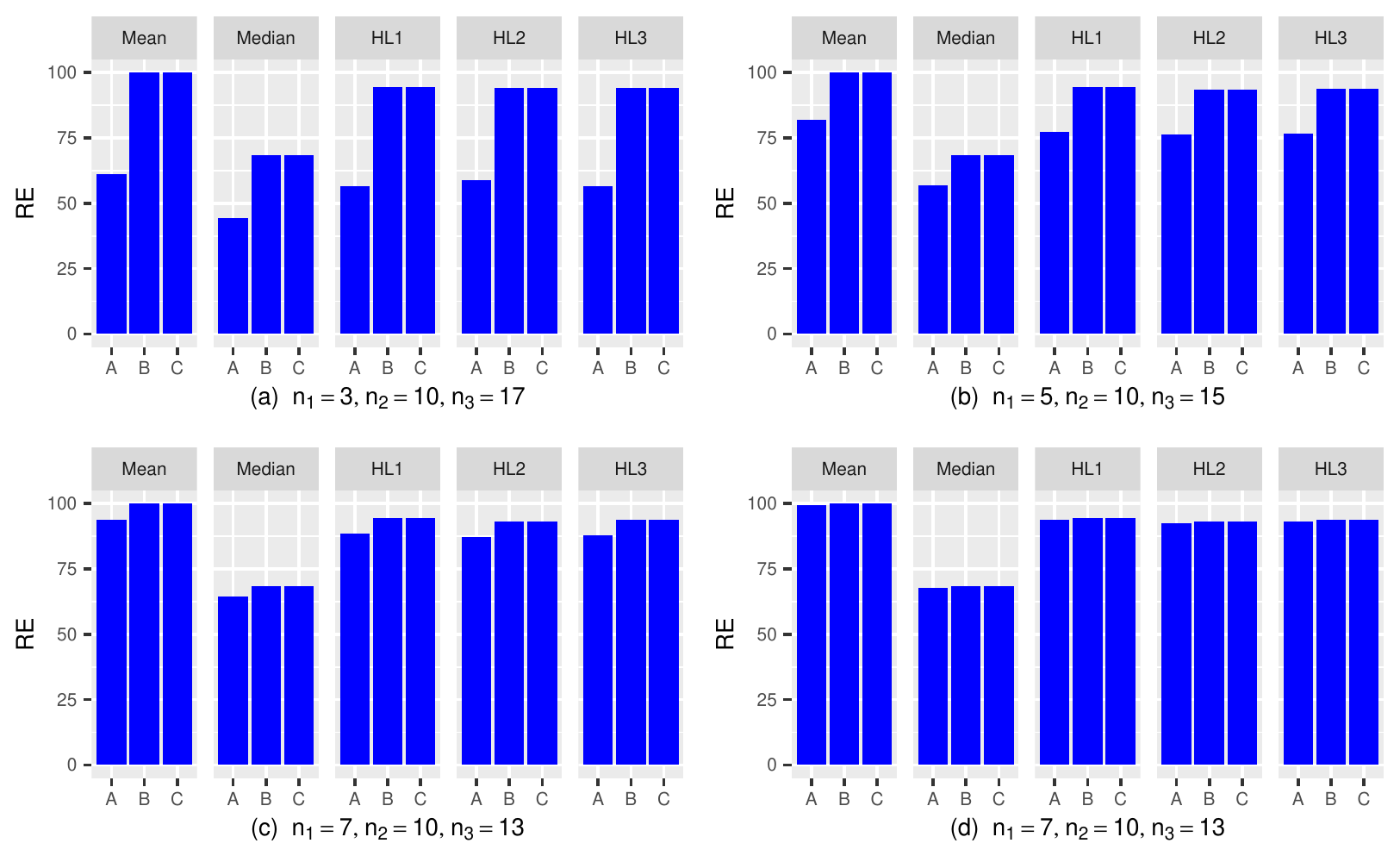}  
\caption{Relative efficiency of the location estimators 
under consideration with different sample sizes. \label{FIG:RElocationpure}}
\bigskip
\bigskip
\centering\includegraphics{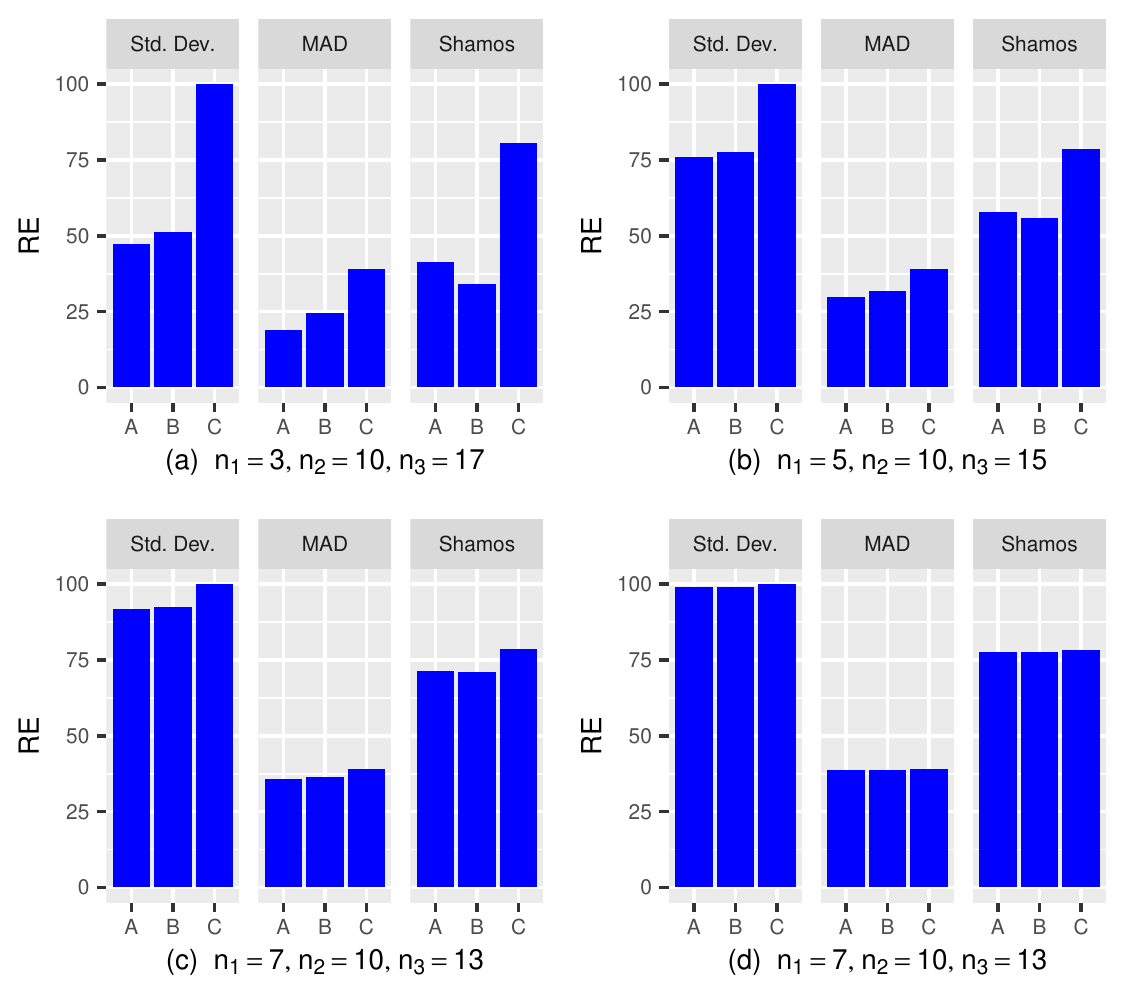}  
\caption{Relative efficiency of the scale estimators
under consideration with different sample sizes. \label{FIG:REscalepure}}
\end{figure}

\subsection{Performance under contamination}
Next, we compare the performance of the proposed methods in (\ref{EQ:muC}) and  (\ref{EQ:sigmaC})
when there is data contamination. Since the estimators tend to be biased in the presence of data contamination, we use the ratio of two mean square errors (MSEs) instead of the ratio of two variances; see, for example, \cite{Park/Leeds:2016} and
\cite{Park/Ouyang/Byun/Leeds:2017,Park/Ouayng/Wang:2021}. Note that using the ratio of the two MSEs allows one to compare the performance of the
estimators based on their variance and bias as well.
Then we have 
$\mathrm{RE}(\hat{\theta}_1 \mid \hat{\theta}_0) = {\mathrm{MSE}(\hat{\theta}_0)}/{\mathrm{MSE}(\hat{\theta}_1)} \times100$ 
(in percentage). By using the estimator with no contamination data as the baseline estimator, we have 
\begin{equation} \label{EQ:REMSE}
\mathrm{RE}(\hat{\theta}_1 \mid \hat{\theta}_0) 
  = \frac{\mathrm{MSE}(\hat{\theta}_0)\textrm{~with no contamination}}%
{\mathrm{MSE}(\hat{\theta}_1)\textrm{~with contamination}} \times 100 .
\end{equation}

We still consider three samples with the four different scenarios with the sample sizes of each scenario given by
(a) $n_1=3$, $n_2=10$, $n_3=17$;
(b) $n_1=5$, $n_2=10$, $n_3=15$;
(c) $n_1=7$, $n_2=10$, $n_3=13$; and
(d) $n_1=9$, $n_2=10$, $n_3=11$.
For the three samples, we generated $X_{ij}$ from the normal distribution 
with mean $\mu_0=100$ and standard deviation $\sigma_0=10$.
It should be noted that there are estimators where a single bad observation can break down estimators. 
For more details, one can refer to Section 11.2 of \cite{Huber/Ronchetti:2009}.
To investigate the impact of a single bad observation, 
we contaminate only the last observation in the second sample by adding $\delta=100$.
Then, for each of the above four scenarios,
we estimate $\mu_0$ and $\sigma_0$ using the three types of estimation methods 
as we did in Section~\ref{SEC:performance}.  
We repeat this simulation one million times ($I=10^6$) to obtain the empirical biases, 
variances, MSE, and RE values summarized 
Table \ref{TBL:varbias1} (scenario a),
Table \ref{TBL:varbias2} (scenario b),
Table \ref{TBL:varbias3} (scenario c), and
Table \ref{TBL:varbias4} (scenario d).
In addition, the MSE values (with squared empirical biases and variances) are plotted in
Figure~\ref{FIG:MSElocation} for the location estimators and
Figure~\ref{FIG:MSEscale} for the scale estimators.
The results clearly show that the conventional estimators (sample mean and standard deviation) are 
very sensitive to data contamination, so that they are seriously biased (even with a single contaminated value), which 
can seriously degrade the performance of the conventional estimators. 
The RE values are summarized in Figures~\ref{FIG:RElocationnoise} and \ref{FIG:REscalenoise}, which also show the underperformance of the conventional estimators.

\begin{table}[thp]
\renewcommand\arraystretch{0.85}
\caption{\label{TBL:varbias1}Estimated variances and biases
($n_1=3$,\quad $n_2=10$,\quad $n_3=17$).}
\begin{center}
\begin{small}
\begin{tabular}{cl*{5}{r}l*{3}{r}}
\hline
 && \multicolumn{5}{c}{Location estimator} && \multicolumn{3}{c}{Scale estimator} \\
 && mean & median & $\mathrm{HL}1$ & $\mathrm{HL}2$ & $\mathrm{HL}3$ && SD & MAD & Shamos \\
\cline{1-1} \cline{3-7} \cline{9-11}
Variance && \\
A        &&  5.4645 & 7.6923 & 6.0507 & 5.8341 & 6.0557 &&  4.6532 & 10.4937 & 5.2532 \\
B        &&  3.3335 & 5.0457 & 3.6957 & 3.7070 & 3.7050 &&  4.3662 &  8.2691 & 6.1166 \\
C        &&  3.3335 & 5.0494 & 3.6951 & 3.7029 & 3.7056 &&  2.5255 &  5.3738 & 2.9547 \\
\cline{1-1}    \cline{3-7} \cline{9-11}
Bias     &&  \\
A        &&  3.3331 & 0.4574 & 0.6964 & 0.6593 & 0.6776 &&  7.9802 &  0.5116 & 1.0234 \\
B        &&  3.3337 & 0.4580 & 0.6970 & 0.6598 & 0.6782 &&  8.1896 &  0.5515 & 0.9536 \\
C        &&  3.3337 & 0.4831 & 0.6962 & 0.6526 & 0.6777 &&  7.9654 &  0.5475 & 0.9904 \\
\cline{1-1}    \cline{3-7} \cline{9-11}
MSE      && \\
A        && 16.5743 & 7.9016 & 6.5357 & 6.2688 & 6.5148 && 68.3374 & 10.7554 & 6.3005 \\
B        && 14.4472 & 5.2555 & 4.1815 & 4.1424 & 4.1659 && 71.4353 &  8.5733 & 7.0258 \\
C        && 14.4472 & 5.2827 & 4.1798 & 4.1289 & 4.1648 && 65.9724 &  5.6735 & 3.9356 \\
\cline{1-1}    \cline{3-7} \cline{9-11}
RE (\%)  && \\
A        && 20.11  &  42.19  & 51.00  & 53.18  & 51.17  &&    2.79 &  17.70  &  30.22 \\
B        && 23.07  &  63.43  & 79.72  & 80.47  & 80.02  &&    2.67 &  22.21  &  27.10 \\
C        && 23.07  &  63.10  & 79.75  & 80.74  & 80.04  &&    2.89 &  33.56  &  48.38 \\
\hline
\end{tabular}
\end{small}
\end{center}
\end{table}

\begin{table}[htbp]
\renewcommand\arraystretch{0.85}
\caption{\label{TBL:varbias2}Estimated variances and biases
($n_1=5$,\quad $n_2=10$,\quad $n_3=15$).}
\begin{center}
\begin{small}
\begin{tabular}{cl*{5}{r}l*{3}{r}}
\hline
 && \multicolumn{5}{c}{Location estimator} && \multicolumn{3}{c}{Scale estimator} \\
 && mean & median & $\mathrm{HL}1$ & $\mathrm{HL}2$ & $\mathrm{HL}3$ && SD & MAD & Shamos \\
\cline{1-1} \cline{3-7} \cline{9-11}
Variance && \\
A        &&  4.0707 & 6.0165 & 4.4794 & 4.5301 & 4.5147  &&  3.1293 & 6.8324 & 3.9380  \\
B        &&  3.3335 & 5.0432 & 3.6917 & 3.7270 & 3.7076  &&  3.0890 & 6.4620 & 4.0363  \\
C        &&  3.3335 & 5.0515 & 3.6911 & 3.7256 & 3.7076  &&  2.5256 & 5.3755 & 3.0535  \\
\cline{1-1}    \cline{3-7} \cline{9-11}
Bias     && \\
A        &&  3.3331 & 0.4599 & 0.6983 & 0.6613 & 0.6798  &&  7.9751 & 0.5136 & 1.0226  \\
B        &&  3.3337 & 0.4603 & 0.6989 & 0.6617 & 0.6802  &&  8.0385 & 0.5243 & 1.0110  \\
C        &&  3.3337 & 0.4857 & 0.6975 & 0.6588 & 0.6802  &&  7.9705 & 0.5499 & 1.0170  \\
\cline{1-1}    \cline{3-7} \cline{9-11}
MSE      && \\
A        && 15.1801 & 6.2280 & 4.9669 & 4.9674 & 4.9769  && 66.7314 & 7.0963 & 4.9836 \\
B        && 14.4472 & 5.2551 & 4.1802 & 4.1648 & 4.1703  && 67.7067 & 6.7369 & 5.0584 \\
C        && 14.4472 & 5.2874 & 4.1775 & 4.1595 & 4.1702  && 66.0547 & 5.6779 & 4.0877 \\
\cline{1-1}    \cline{3-7} \cline{9-11}
RE (\%)  && \\
A        &&  21.96  & 53.52  & 67.11  & 67.11  & 66.98   &&   2.86 &  26.87 &  38.27 \\
B        &&  23.07  & 63.43  & 79.74  & 80.04  & 79.93   &&   2.82 &  28.31 &  37.70\\
C        &&  23.07  & 63.05  & 79.80  & 80.14  & 79.94   &&   2.89 &  33.59 &  46.65\\
\hline
\end{tabular}
\end{small}
\end{center}
\end{table}

\begin{table}[htbp]
\renewcommand\arraystretch{0.85}
\caption{\label{TBL:varbias3}Estimated variances and biases
($n_1=7$,\quad $n_2=10$,\quad $n_3=13$).}
\begin{center}
\begin{small}
\begin{tabular}{cl*{5}{r}l*{3}{r}}
\hline
 && \multicolumn{5}{c}{Location estimator} && \multicolumn{3}{c}{Scale estimator} \\
 && mean & median & $\mathrm{HL}1$ & $\mathrm{HL}2$ & $\mathrm{HL}3$ && SD & MAD & Shamos \\
\cline{1-1} \cline{3-7} \cline{9-11}
Variance &&  \\
A        &&  3.5502 & 5.3307 & 3.9263 & 3.9794 & 3.9536  && 2.6910 &  5.7545 & 3.3054  \\
B        &&  3.3335 & 5.0403 & 3.6926 & 3.7380 & 3.7133  && 2.6835 &  5.6769 & 3.3238  \\
C        &&  3.3335 & 5.0507 & 3.6922 & 3.7377 & 3.7139  && 2.5240 &  5.3844 & 3.0612  \\
\cline{1-1}    \cline{3-7} \cline{9-11}
Bias     && \\
A        &&  3.3341 & 0.4592 & 0.6969 & 0.6598 & 0.6779  && 7.9805 &  0.5090 & 1.0211 \\
B        &&  3.3337 & 0.4587 & 0.6965 & 0.6594 & 0.6776  && 7.9985 &  0.5106 & 1.0186 \\
C        &&  3.3337 & 0.4843 & 0.6957 & 0.6589 & 0.6789  && 7.9796 &  0.5466 & 1.0165 \\
\cline{1-1}    \cline{3-7} \cline{9-11}
MSE      && \\
A        && 14.6663 & 5.5415 & 4.4120 & 4.4148 & 4.4131  && 66.3798 & 6.0137 & 4.3480 \\
B        && 14.4472 & 5.2507 & 4.1778 & 4.1728 & 4.1724  && 66.6592 & 5.9377 & 4.3613 \\
C        && 14.4472 & 5.2853 & 4.1762 & 4.1718 & 4.1748  && 66.1982 & 5.6832 & 4.0943 \\
\cline{1-1}    \cline{3-7} \cline{9-11}
RE (\%)  && \\
A        &&  22.73 &  60.16 & 75.56 & 75.51 & 75.54  &&   2.87 & 31.73  & 43.89 \\
B        &&  23.07 &  63.49 & 79.79 & 79.89 & 79.90  &&   2.86 & 32.14  & 43.75 \\
C        &&  23.07 &  63.07 & 79.82 & 79.91 & 79.85  &&   2.88 & 33.58  & 46.60 \\
\hline
\end{tabular}
\end{small}
\end{center}
\end{table}

\begin{table}[htbp]
\renewcommand\arraystretch{0.85}
\caption{\label{TBL:varbias4}Estimated variances and biases
($n_1=9$,\quad $n_2=10$,\quad $n_3=11$).}
\begin{center}
\begin{small}
\begin{tabular}{cl*{5}{r}l*{3}{r}}
\hline
 && \multicolumn{5}{c}{Location estimator} && \multicolumn{3}{c}{Scale estimator} \\
 && mean & median & $\mathrm{HL}1$ & $\mathrm{HL}2$ & $\mathrm{HL}3$ && SD & MAD & Shamos \\
\cline{1-1} \cline{3-7} \cline{9-11}
Variance && \\
A        &&  3.3552 & 5.0699 & 3.7179 & 3.7616 & 3.7344  && 2.5416 &  5.3737 & 3.0974 \\
B        &&  3.3335 & 5.0408 & 3.6955 & 3.7373 & 3.7110  && 2.5410 &  5.3650 & 3.0989 \\
C        &&  3.3335 & 5.0519 & 3.6952 & 3.7368 & 3.7114  && 2.5248 &  5.3815 & 3.0763 \\
\cline{1-1}    \cline{3-7} \cline{9-11}
Bias     && \\
A        &&  3.3338 & 0.4602 & 0.6979 & 0.6609 & 0.6791  && 7.9790 &  0.5122 & 1.0233 \\
B        &&  3.3337 & 0.4602 & 0.6978 & 0.6608 & 0.6790  && 7.9808 &  0.5106 & 1.0227 \\
C        &&  3.3337 & 0.4858 & 0.6972 & 0.6599 & 0.6799  && 7.9790 &  0.5495 & 1.0230 \\
\cline{1-1}    \cline{3-7} \cline{9-11}
MSE      && \\
A        && 14.4696 & 5.2817 & 4.2050 & 4.1984 & 4.1957  && 66.2064 & 5.6360 & 4.1446  \\
B        && 14.4472 & 5.2526 & 4.1824 & 4.1739 & 4.1720  && 66.2346 & 5.6257 & 4.1448  \\
C        && 14.4472 & 5.2879 & 4.1814 & 4.1723 & 4.1736  && 66.1897 & 5.6834 & 4.1229  \\
\cline{1-1}    \cline{3-7} \cline{9-11}
RE (\%)  && \\
A        && 23.04 &  63.11 & 79.27 & 79.40 & 79.45  &&   2.88 & 33.86 &  46.04\\
B        && 23.07 &  63.46 & 79.70 & 79.87 & 79.90  &&   2.88 & 33.92 &  46.04\\
C        && 23.07 &  63.04 & 79.72 & 79.90 & 79.87  &&   2.88 & 33.58 &  46.29\\
\hline
\end{tabular}
\end{small}
\end{center}
\end{table}

\bigskip
\begin{figure}[thbp]
\centering\includegraphics{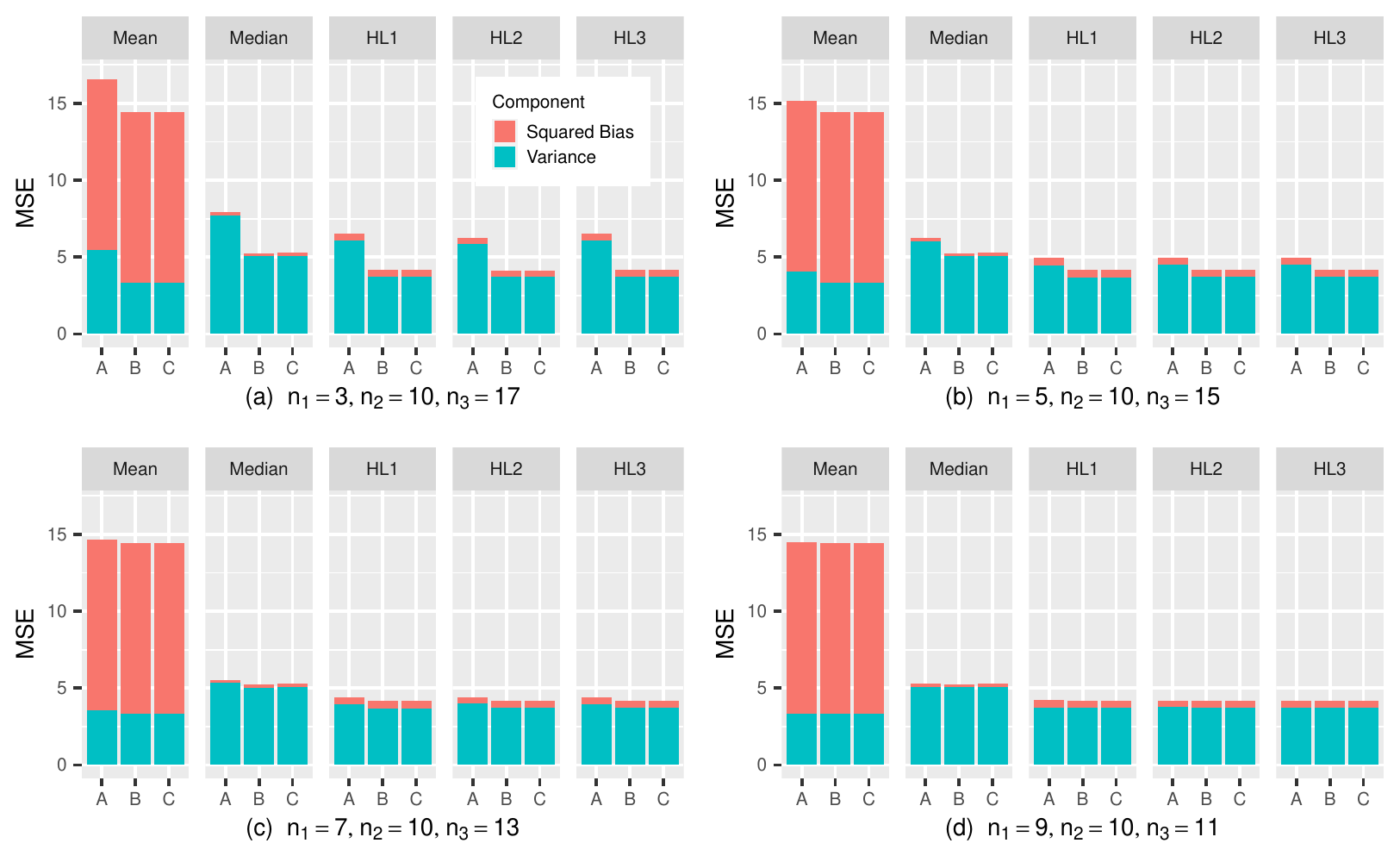}  
\caption{MSE of the location estimators
under consideration with different sample sizes. \label{FIG:MSElocation}}
\bigskip
\bigskip
\centering\includegraphics{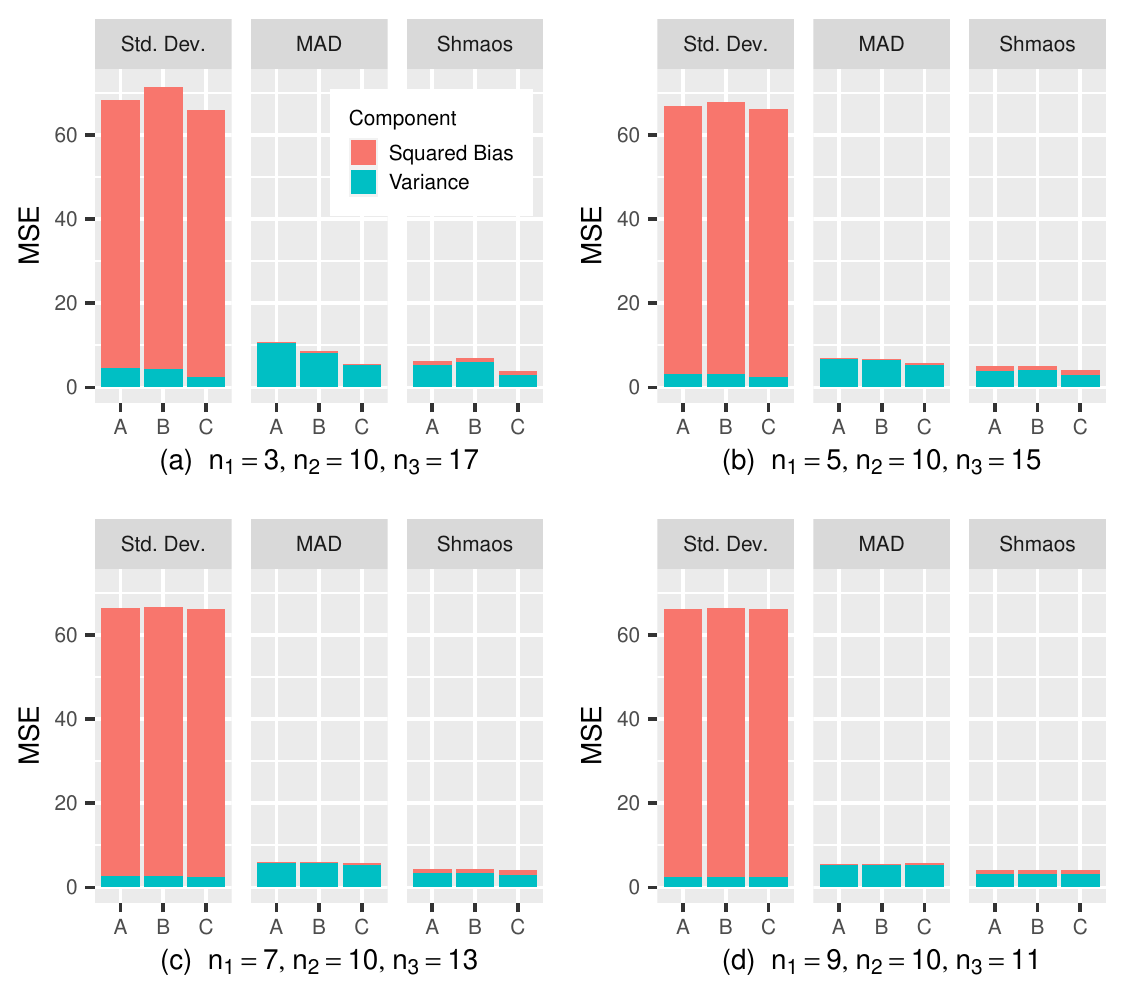}  
\caption{MSE of the scale estimators
under consideration with different sample sizes. \label{FIG:MSEscale}}
\end{figure}

\bigskip
\begin{figure}[thbp]
\centering\includegraphics{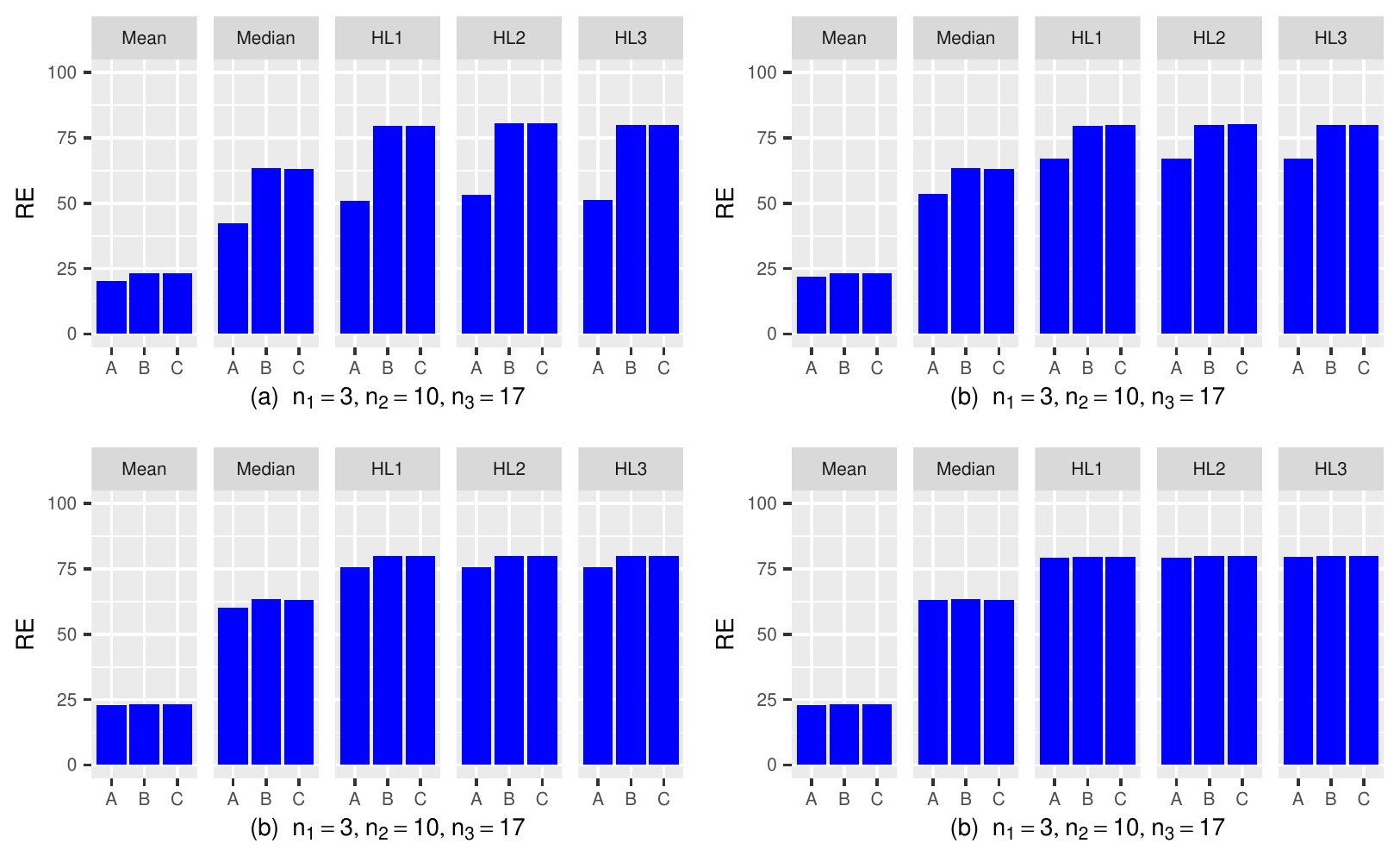}  
\caption{Relative efficiency of the location estimators
under consideration with different sample sizes. 
\label{FIG:RElocationnoise}}
\bigskip
\bigskip
\centering\includegraphics{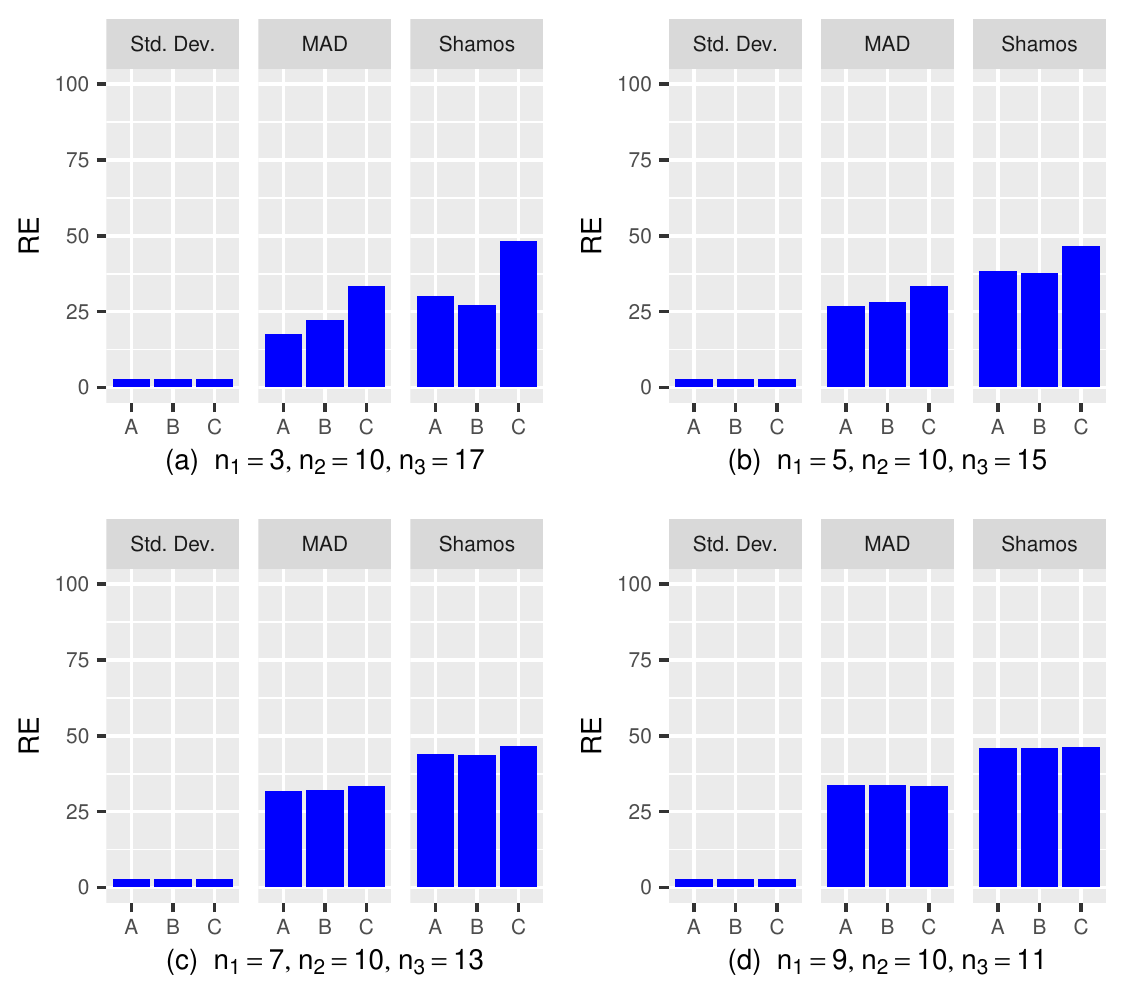}  
\caption{Relative efficiency of the scale estimators
under consideration with different sample sizes. \label{FIG:REscalenoise}}
\end{figure}

\clearpage
\section{Construction of robust $\bar{X}$ control chart with unequal sample sizes} \label{section4}
We briefly introduce how to construct the control charts and then discuss 
how to implement the proposed estimators in (\ref{EQ:muC}) and  (\ref{EQ:sigmaC})
to construct the robust $\bar{X}$ charts with unequal sample sizes. We assume that we have $m$ samples from Phase-I period and that each sample in Phase-I has different sample sizes, denoted by $n_i$ where $i=1,2,\ldots,m$.
Then, we monitor the process with a sample of size $n_k$ in Phase-II.

\subsection{Construction of robust $\bar{X}$ chart\label{SEC:Xbarchart}}
It is well-known that the statistical asymptotic theory has  
\[
\frac{\bar{X}_k - E(\bar{X}_k)}{\mathrm{SE}(\bar{X}_k)} \stackrel{d}{\longrightarrow} N(0,1),
\]
where $\stackrel{d}{\longrightarrow}$ denotes the convergence in distribution
and $\bar{X}_k$ is the sample mean with sample size $n_k$.
Solving $(\bar{X}_k - E(\bar{X}_k))/\mathrm{SE}(\bar{X}_k) = \pm g$ for $\bar{X}_k$, 
we have the $\mathrm{CL} \pm g\cdot \mathrm{SE}$ control limits given by 
\[
E(\bar{X}_k) \pm g\cdot\mathrm{SE}(\bar{X}_k) = \mu \pm g\cdot\frac{\sigma}{\sqrt{n_k}}.
\]
The population parameters, $\mu$ and $\sigma$, are generally unknown  and need to be estimated in Phase-I period. 
As long as we obtain the estimates, $\hat\mu$ and $\hat\sigma$, we have
$\mathrm{UCL} = \hat{\mu} + g{\hat{\sigma}}/{\sqrt{n_k}}$, 
$\mathrm{CL}  = \hat{\mu}$, and 
$\mathrm{LCL} = \hat{\mu} - g{\hat{\sigma}}/{\sqrt{n_k}}$.

To estimate $\mu$, one can use any type of $\bar{\hat{\mu}}_A$ in (\ref{EQ:muA}), 
    $\bar{\hat{\mu}}_B$ in (\ref{EQ:XB}), and $\bar{\hat{\mu}}_C$ in (\ref{EQ:muC}).
For $\sigma$, one can use any type of $\bar{\hat{\sigma}}_A$ in (\ref{EQ:sigmaA}), 
    $\bar{\hat{\sigma}}_B$ in (\ref{EQ:sigmaB}), and $\bar{\hat{\sigma}}_C$ in (\ref{EQ:sigmaC}).
As an illustration, using  $\bar{\hat{\mu}}_A$ and $\bar{\hat{\sigma}}_A$ with the American Standard, 
we have $\bar{X}$ chart as follows:
\begin{align*}
\mathrm{UCL}_A &= \bar{\hat{\mu}}_A + \frac{3\bar{\hat{\sigma}}_A}{\sqrt{n_k}}  \\
 \mathrm{CL}_A &= \bar{\hat{\mu}}_A  \\
\mathrm{LCL}_A &= \bar{\hat{\mu}}_A - \frac{3\bar{\hat{\sigma}}_A}{\sqrt{n_k}} .
\end{align*}

We consider the following combinations for each of the three different estimation methods.
\begin{tabbing} 
Method-I: \quad \= using the sample mean and standard deviation. \\
Method-II:      \> using the median and MAD. \\
Method-III:     \> using the HL     and Shamos.    
\end{tabbing} 
Note that the combination of the sample mean and standard deviation performs the best 
under no contamination although they have a zero breakdown point.
Also, the combination of the median and MAD was chosen because
both estimators have a breakdown point of 50\%.
Both the HL and Shamos estimators have a breakdown point of 29\% and perform well in
both the presence and absence of contamination.

\subsection{Numerical studies}
We carry out extensive Monte Carlo simulations to  compare the performance of 
the proposed $\bar{X}$ charts with respect to the ARL, SDRL and the percentile of 
run length (PRL) of the run lengths (RLs).
In Phase-I, we consider the following five different plans. 
Each plan has fifteen samples and their sample sizes are given as follow. 
\begin{tabbing}
Plan-1:\quad\=$n_1=n_2=\cdots=n_5=3$, \quad\=$n_6=n_7=\cdots=n_{10}=10$,\quad\=$n_{11}=n_{12}=\cdots=n_{15}=17$ \\
Plan-2:     \>$n_1=n_2=\cdots=n_5=5$,      \>$n_6=n_7=\cdots=n_{10}=10$,     \>$n_{11}=n_{12}=\cdots=n_{15}=15$ \\
Plan-3:     \>$n_1=n_2=\cdots=n_5=7$,      \>$n_6=n_7=\cdots=n_{10}=10$,     \>$n_{11}=n_{12}=\cdots=n_{15}=13$ \\
Plan-4:     \>$n_1=n_2=\cdots=n_5=9$,      \>$n_6=n_7=\cdots=n_{10}=10$,     \>$n_{11}=n_{12}=\cdots=n_{15}=11$ \\
Plan-5:     \>$n_1=n_2=\cdots=n_5=10$,     \>$n_6=n_7=\cdots=n_{10}=10$,     \>$n_{11}=n_{12}=\cdots=n_{15}=10$
\end{tabbing}
We generate $X_{ij}$ from the normal distribution with $\mu_0=100$ and $\sigma_0=5$ in Phase-I.
Then we construct control charts using each of the three types: 
$(\bar{\hat{\mu}}_A, \bar{\hat{\sigma}}_A)$ pair, 
$(\bar{\hat{\mu}}_B, \bar{\hat{\sigma}}_B)$ pair, and
$(\bar{\hat{\mu}}_C, \bar{\hat{\sigma}}_C)$ pair
with each of the three different estimation methods. 
Then, in Phase-II, we monitor the process with a sample of size $n_k=10$
from the same normal distribution. 
By repeating this experiment $I=100,000$ times, we obtain the $100,000$ run lengths and calculate the ARL, SDRL,  PRL, and skewness of RLs, which are reported in Table~\ref{TBL:ARLXbarpure}. 
We also draw the box-percentile plots with all 99 percentiles shown in Figure~\ref{FIG:RL-Xbar-pure}. 
For brevity, we considered only Plan-1, Plan-3 and Plan-5 for the box-percentile plots.
Considering that the desired ARL0 is 370,
the pooling type C clearly outperforms for all the methods under consideration. 
We can also observe that Method-I and Method-III are quite comparable, while Method-II underperforms.
Also, this can be observed from Figure~\ref{FIG:RL-Xbar-pure}.

To investigate the effect of data contamination, we add a value ($\delta=100$) 
in the last observation of the fifteenth sample
for all five plans in Phase-I. Again, we repeat this experiment $I=100,000$ times and calculate the ARL, SDRL, PRL, and skewness, which are provided in Table~\ref{TBL:ARLXbarnoise} and Figure~\ref{FIG:RL-Xbar-noise}.
Numerical results clearly show that the values using Method~I change dramatically, while
those using Method~II and Method~III are almost kept unchanged. 
This observation clearly shows the superiority of 
the proposed chart based on the process estimators in (\ref{EQ:muC}) and  (\ref{EQ:sigmaC}).

\begin{table}[tp]
\renewcommand\arraystretch{0.85}
\caption{\label{TBL:ARLXbarpure}Estimated ARL, SDRL, PRL (99\%), and skewness of the RLs 
for the $\bar{X}$ charts based on three different methods ($n_k=10$) with no contamination.}
\begin{small}
\begin{tabular}{@{}c*{12}{r}@{}}
\hline
 && \multicolumn{3}{c}{Method-I}
 && \multicolumn{3}{c}{Method-II}
 && \multicolumn{3}{c}{Method-III} \\
 && A & B & C && A & B & C && A & B & C \\
\cline{1-1} \cline{3-5} \cline{7-9} \cline{11-13}
{Plan-1}  &&\multicolumn{11}{l}{($n_1=n_2=\cdots=n_5=3$,~ $n_6=n_7=\cdots=n_{10}=10$,~$n_{11}=n_{12}=\cdots=n_{15}=17$)}\\
ARL       &&   476.1 &  456.5 &  366.8 && 1455.2 &  809.0 &  491.4 &&  519.0 &  599.7 &  382.3  \\
SDRL      &&  1196.7 & 1056.1 &  554.5 &&33077.2 & 5878.8 & 1342.5 && 1506.1 & 2253.9 &  631.0  \\
99\%      &&  4457.1 & 4064.0 & 2539.0 &&17424.1 &10264.1 & 4998.0 && 5199.0 & 6537.1 & 2819.0  \\
skewness  &&    25.9 &  26.4  &    6.5 &&  160.4 &   72.6 &   17.9 &&   25.1 &   41.7 &    7.5  \\
\cline{1-1} \cline{3-5} \cline{7-9} \cline{11-13}

{Plan-2}  &&\multicolumn{11}{l}{($n_1=n_2=\cdots=n_5=5$,~$n_6=n_7=\cdots=n_{10}=10$,~$n_{11}=n_{12}=\cdots=n_{15}=15$)}\\
ARL       &&   392.9 &  391.0 &  367.1 &&  608.8 &  575.9 &  486.8 &&  427.8 &  434.6 &  384.6 \\
SDRL      &&   650.5 &  641.1 &  547.4 && 2548.2 & 2311.0 & 1351.4 &&  926.8 &  957.7 &  639.0 \\
99\%      &&  2977.0 & 2946.0 & 2517.0 && 6964.0 & 6387.0 & 4964.0 && 3581.0 & 3700.0 & 2885.0 \\
skewness  &&     7.1 &    6.9 &    6.1 &&   42.1 &   46.4 &   27.7 &&   26.7 &   25.1 &    7.8 \\
\cline{1-1} \cline{3-5} \cline{7-9} \cline{11-13}

{Plan-3}  &&\multicolumn{11}{l}{($n_1=n_2=\cdots=n_5=7$,~$n_6=n_7=\cdots=n_{10}=10$,~$n_{11}=n_{12}=\cdots=n_{15}=13$)}\\
ARL       &&   373.9 &  374.4 &  367.8 &&  522.0 &  514.3 &  490.2 &&  396.2 &  397.4 &  383.7  \\
SDRL      &&   589.7 &  589.5 &  557.1 && 1606.7 & 1568.5 & 1482.1 &&  685.2 &  690.7 &  631.3  \\
99\%      &&  2676.0 & 2671.0 & 2577.0 && 5583.0 & 5419.0 & 4904.0 && 3094.0 & 3112.0 & 2865.0  \\
skewness  &&     7.5 &    7.5 &    7.1 &&   28.3 &   29.5 &   35.6 &&    7.9 &    8.0 &    7.2  \\
\cline{1-1} \cline{3-5} \cline{7-9} \cline{11-13}

{Plan-4}  &&\multicolumn{11}{l}{($n_1=n_2=\cdots=n_5=9$,~$n_6=n_7=\cdots=n_{10}=10$,~$n_{11}=n_{12}=\cdots=n_{15}=11$)}\\
ARL       &&   368.5 &  368.5 &  367.4 &&  491.9 &  491.5 &  490.3 &&  384.3 &  384.3 &  383.7  \\
SDRL      &&   556.2 &  556.2 &  553.7 && 1289.7 & 1288.5 & 1333.2 &&  631.0 &  631.3 &  626.5  \\
99\%      &&  2549.0 & 2549.0 & 2534.0 && 5202.1 & 5177.0 & 5122.0 && 2881.0 & 2881.0 & 2866.0  \\
skewness  &&     6.4 &    6.4 &    6.5 &&   14.7 &   14.7 &   18.4 &&    6.7 &    6.7 &    6.6  \\
\cline{1-1} \cline{3-5} \cline{7-9} \cline{11-13}

{Plan-5}  &&\multicolumn{11}{l}{($n_1=n_2=\cdots=n_5=10$,~$n_6=n_7=\cdots=n_{10}=10$,~$n_{11}=n_{12}=\cdots=n_{15}=10$)}\\
ARL       &&   368.7 &  368.7 &  368.7 &&  481.1 &  481.1 &  481.1 &&  385.9 &  385.9 &  385.9 \\
SDRL      &&   556.1 &  556.1 &  556.1 && 1280.2 & 1280.2 & 1280.2 &&  642.2 &  642.2 &  642.2 \\
99\%      &&  2545.0 & 2545.0 & 2545.0 && 4826.0 & 4826.0 & 4826.0 && 2913.0 & 2913.0 & 2913.0 \\
skewness  &&     6.4 &    6.4 &    6.4 &&   21.1 &   21.1 &   21.1 &&    7.8 &    7.8 &    7.8 \\
\hline 
\end{tabular}
\end{small}
\end{table}

\begin{table}[tp]
\renewcommand\arraystretch{0.85}
\caption{\label{TBL:ARLXbarnoise}Estimated ARL, SDRL, PRL (99\%), and skewness of the RLs 
for the $\bar{X}$ charts based on three different methods ($n_k=10$) with contamination.
The last value in the fifteenth sample is contaminated by adding $\delta=100$.
}
\begin{small}
\begin{tabular}{@{}c*{12}{r}@{}}
\hline
 && \multicolumn{3}{c}{Method-I}
 && \multicolumn{3}{c}{Method-II}
 && \multicolumn{3}{c}{Method-III} \\
 && A & B & C && A & B & C && A & B & C \\
\cline{1-1} \cline{3-5} \cline{7-9} \cline{11-13}
{Plan-1}  &&\multicolumn{11}{l}{($n_1=n_2=\cdots=n_5=3$,~ $n_6=n_7=\cdots=n_{10}=10$,~$n_{11}=n_{12}=\cdots=n_{15}=17$)}\\
ARL       &&  6178.6 &  6587.0 &  66089.1 &&  1398.6 &   850.2 &  540.2 &&  579.3 &  657.6 &  466.7 \\
SDRL      && 24535.1 & 24566.0 & 175775.1 && 18574.6 &  5890.5 & 1559.6 && 1713.6 & 2447.5 &  798.0 \\
99\%      && 72414.1 & 74755.1 & 700892.6 && 18548.2 & 11143.3 & 5578.0 && 5960.1 & 7396.1 & 3569.0 \\
skewness  &&    35.2 &    34.6 &     18.2 &&   100.2 &    95.2 &   24.0 &&   22.3 &   41.8 &    7.9 \\
\cline{1-1} \cline{3-5} \cline{7-9} \cline{11-13}

{Plan-2}  &&\multicolumn{11}{l}{($n_1=n_2=\cdots=n_5=5$,~$n_6=n_7=\cdots=n_{10}=10$,~$n_{11}=n_{12}=\cdots=n_{15}=15$)}\\
ARL       &&  6036.8 &  6395.5 &  46438.2 &&   650.3 &   613.6 &  540.8 &&  486.5 &  492.1 &  470.4 \\
SDRL      && 14884.3 & 15719.9 & 121929.7 &&  2682.5 &  2316.1 & 1524.5 && 1012.9 & 1042.0 &  802.5 \\
99\%      && 60810.1 & 63944.0 & 491134.8 &&  7492.1 &  7027.1 & 5557.0 && 4224.1 & 4334.0 & 3596.0 \\
skewness  &&    14.5 &    14.3 &     16.7 &&    40.2 &    41.9 &   21.9 &&   12.2 &   12.0 &    7.5 \\
\cline{1-1} \cline{3-5} \cline{7-9} \cline{11-13}

{Plan-3}  &&\multicolumn{11}{l}{($n_1=n_2=\cdots=n_5=7$,~$n_6=n_7=\cdots=n_{10}=10$,~$n_{11}=n_{12}=\cdots=n_{15}=13$)}\\
ARL       &&  7864.9 &  8146.6 &  31086.0 &&   568.1 &   561.1 &  544.0 &&  458.9 &  459.7 &  474.2 \\
SDRL      && 17590.3 & 18306.0 &  79459.0 &&  1768.6 &  1709.7 & 1570.0 &&  825.8 &  834.9 &  821.4 \\
99\%      && 75670.8 & 79065.0 & 321388.3 &&  6213.0 &  5997.0 & 5640.0 && 3663.0 & 3678.0 & 3687.0 \\
skewness  &&     9.5 &     9.6 &     13.8 &&    22.0 &    22.0 &   20.9 &&    8.3 &    8.6 &    7.6 \\
\cline{1-1} \cline{3-5} \cline{7-9} \cline{11-13}

{Plan-4}  &&\multicolumn{11}{l}{($n_1=n_2=\cdots=n_5=9$,~$n_6=n_7=\cdots=n_{10}=10$,~$n_{11}=n_{12}=\cdots=n_{15}=11$)}\\
ARL       && 11724.3 &  11892.9 & 19452.7 &&   545.5 &   546.2 &  546.5 &&  463.7 &  463.2 &  473.9 \\
SDRL      && 25690.7 &  26153.4 & 44571.0 &&  1656.1 &  1655.0 & 1597.2 &&  788.9 &  788.1 &  813.3 \\
99\%      &&115429.0 & 118007.4 &192744.8 &&  5592.0 &  5618.1 & 5639.1 && 3585.0 & 3579.0 & 3679.1 \\
skewness  &&     8.8 &      8.7 &    10.1 &&    33.1 &    33.1 &   31.0 &&    7.2 &    7.2 &    7.4 \\
\cline{1-1} \cline{3-5} \cline{7-9} \cline{11-13}

{Plan-5}  &&\multicolumn{11}{l}{($n_1=n_2=\cdots=n_5=10$,~$n_6=n_7=\cdots=n_{10}=10$,~$n_{11}=n_{12}=\cdots=n_{15}=10$)}\\
ARL       && 15165.5 &  15165.5 & 15165.5 &&   528.5 &   528.5 &  528.5 &&  473.9 &  473.9 &  473.9 \\
SDRL      && 33868.8 &  33868.8 & 33868.8 &&  1323.0 &  1323.0 & 1323.0 &&  800.0 &  800.0 &  800.0 \\
99\%      &&149699.4 & 149699.4 &149699.4 &&  5297.0 &  5297.0 & 5297.0 && 3676.0 & 3676.0 & 3676.0 \\
skewness  &&     8.7 &      8.7 &     8.7 &&    13.0 &    13.0 &   13.0 &&    7.0 &    7.0 &    7.0 \\
\hline 
\end{tabular}
\end{small}
\end{table}

\bigskip
\begin{figure}[ptbh]
\centering\includegraphics{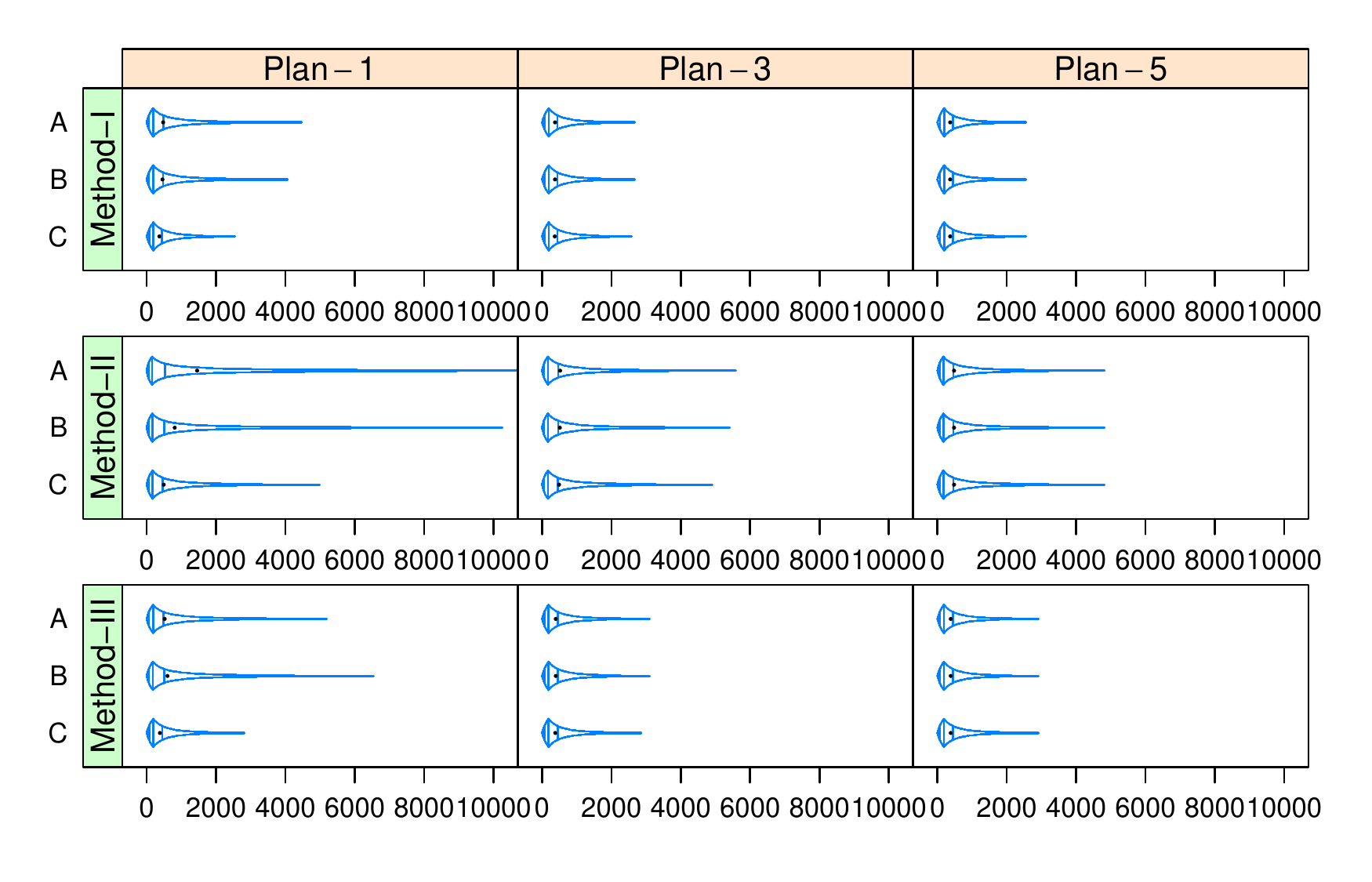}  

\vspace*{-5ex}
\caption{The box-percentile plots of the run lengths of $\bar{X}$ charts under consideration 
(with no contamination).\label{FIG:RL-Xbar-pure}}

\centering\includegraphics{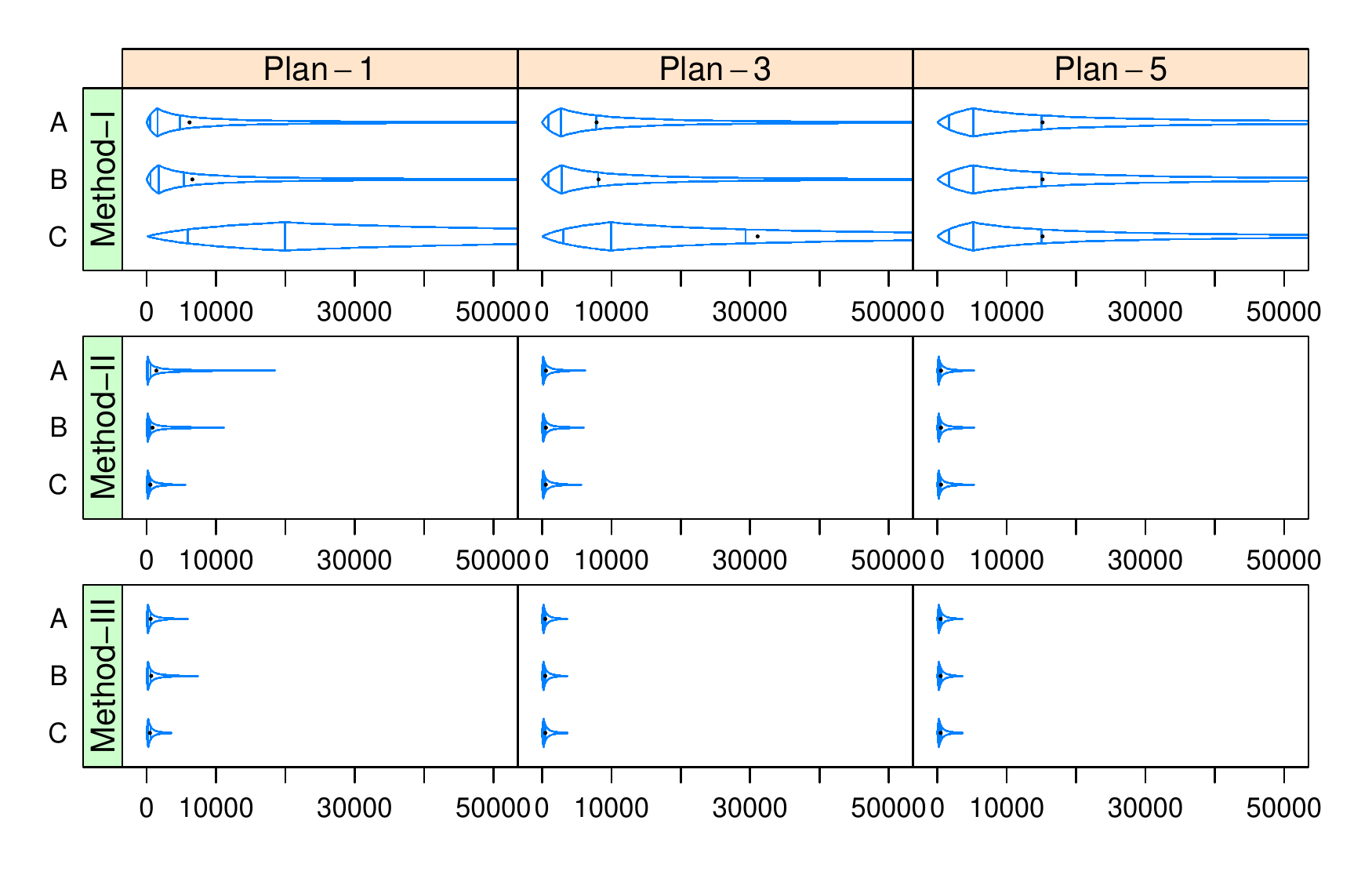}  

\vspace*{-5ex}
\caption{The box-percentile plots of the run lengths of $\bar{X}$ charts under consideration
(with contamination).\label{FIG:RL-Xbar-noise}}
\end{figure}

 \clearpage
\section{An illustrative example} \label{section5}
In this section, we consider a real-data set from Table~6.4 of \cite{Montgomery:2013a}. The data set includes 113 measurements (in millimeters) of the diameters of piston rings
for an automotive engine produced by a forging process.
The observations were obtained from twenty-five samples and the sample sizes are
given by 5, 3, 5, 5, 5, 4, 4, 5, 4, 5, 5, 5, 3, 5, 3, 5, 4, 5, 5, 3, 5, 5, 5, 5, 5.
To construct the $\bar{X}$ chart with the size $n_k=5$, we used the pooling type C in (\ref{EQ:muC}) and (\ref{EQ:sigmaC}) 
for all the methods (Methods-I, II, III mentioned in Subsection~\ref{SEC:Xbarchart}). 
It should be noted that as an estimate of $\sigma$, \cite{Montgomery:2013a} used the square root of the pooled sample
variance, which is different from the pooled standard deviation with the pooling type C.

\begin{table}[htbp]
\caption{\label{TBL:Example-mw4-1}
Control limits of the $\bar{X}$ charts with $n_k=5$.}
\centering
\begin{tabular}{clccc} 
\hline
\cline{1-1} \cline{3-5}
Montgomery    && 73.98606 & 74.00075 & 74.01544  \\
Method-I   && 73.98693 & 74.00075 & 74.01457 \\
Method-II  && 73.98650 & 74.00139 & 74.01629 \\
Method-III && 73.98644 & 74.00072 & 74.01499 \\
\hline
\end{tabular}
\end{table}

In this example, we investigate the sensitivity of the considered methods due to  data contamination. 
To be more specific, 
we add a single contaminated observation (denoted by $\delta$) in the first sample and then calculate 
LCL, CL, and UCL. 
To further investigate their effect due to the level of contaminated value, we changed the value of $\delta$ in a grid-like fashion such as $\delta=73(0.1)74$. Then we consider an empirical approach analogous to the underlying spirit of the influence function by investigating how any method is affected by the level of contaminated value. For the illustrative examples on the implementation of this empirical procedure, one may refer to Figure~5 of \cite{Park/Basu:2011} and Figure~3 of \cite{Park/Kim/Wang:2022}.
Numerical results are depicted in Figure~\ref{FIG:Example-mw4-1}. We observe that the $\bar{X}$ control charts based on \cite{Montgomery:2013a} and Method-I are 
very sensitive to contamination, whereas those based on Method-II and Method-III are quite robust.

\bigskip
\begin{figure}[tph]
\centering\includegraphics{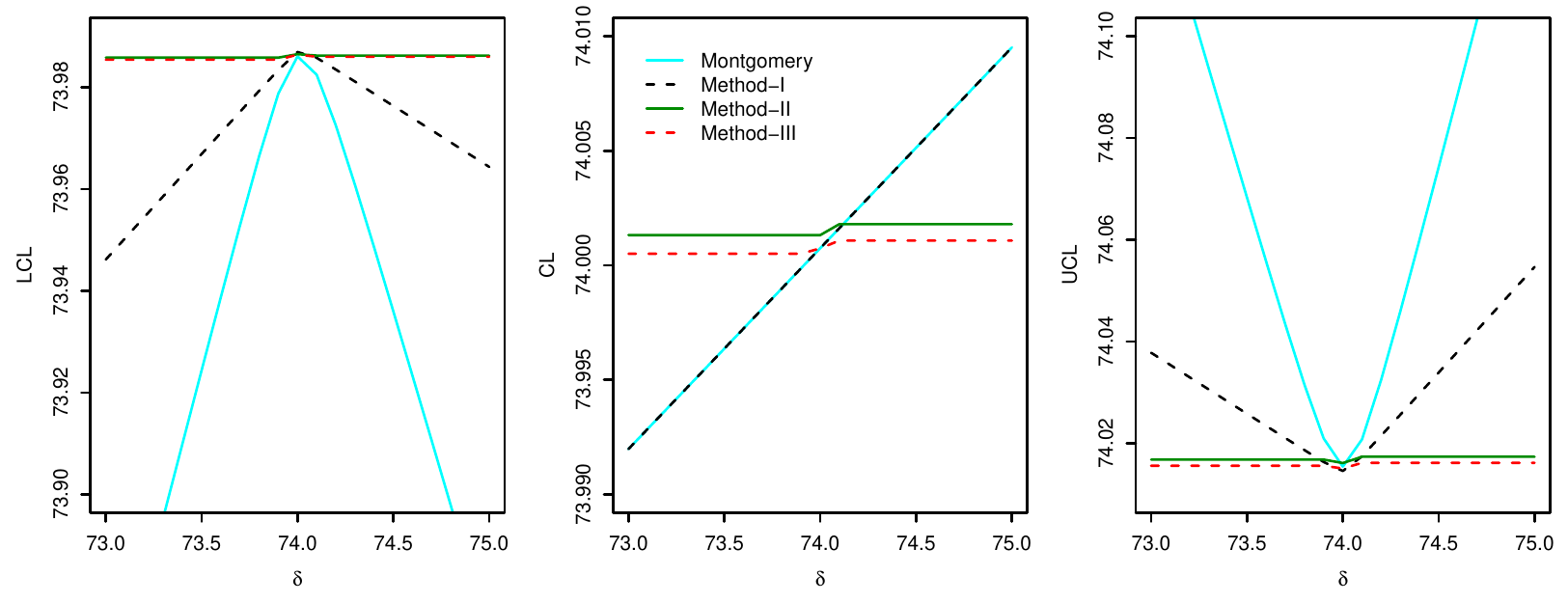}  
\caption{LCL, CL and UCL of $\bar{X}$ chart, and their effect due to the amount of contamination
level ($\delta$).\label{FIG:Example-mw4-1}}
\end{figure}

\section{Concluding remarks} \label{section6}
In this paper, we have developed robust $\bar{X}$ charts, which can simultaneously deal with the issues of unequal sample sizes and data contamination. We showed that in the construction of an $\bar{X}$ control chart with unequal sample sizes, special care should be taken, because inappropriate pooling of estimates from different sample sizes can lead to misleading results. 
To overcome this challenge, 
we have proposed the two process parameter estimation methods in (\ref{EQ:muC}) (location) 
and (\ref{EQ:sigmaC}) (scale)
for optimal pooling in the sense that they are the BLUEs for the unknown process parameters. 
In addition, we have investigated the effect of data contamination on the performance of  robust $\bar{X}$ charts with these BLUEs in terms of 
the ARL, SDRL, PRL and skewness. 
Numerical results from extensive Monte Carlo simulations and a real data analysis revealed that the traditional variable control charts seriously underperform for monitoring process in the presence of data contamination and are extremely sensitive to even a single 
contaminated value, while the proposed robust $\bar{X}$ control charts outperform 
in a manner that is comparable with existing ones, 
whereas they are far superior when the data are contaminated by outliers.  

It should be noteworthy that the proposed robust $X$-bar charts based on the proposed methods 
in (\ref{EQ:muC}) and  (\ref{EQ:sigmaC}) rely on a key assumption that the underlying distribution 
is normally distributed, thus limiting their practical applications in statistical process monitoring, 
since this distribution is (almost always) unknown. 
A natural way to deal with the non-normality model departure issue is to adopt a certain type of data 
transformations to achieve approximate normality, whereas \cite{Khakifirooz:2021} recently pointed out 
several dangers and pitfalls in the use of nonlinear transformations to obtain the
approximate normality of the process data. 
In ongoing work, we continue to investigate the performance of the robust $X$-bar control charts 
when the underlying distribution exhibits departures from the normal distribution, such as skewed distributions. 

Finally, it is noteworthy that
 we have developed \texttt{rcc()} function in \texttt{rQCC} R package \citep{Park/Wang:2022b}
to construct various robust control charts to help field engineers and practitioners 
to monitor the state of the process.


\bibliographystyle{apalike}
\bibliography{REFmw4}

\end{document}